\author{%
  Jesus Salas\thanks{%
     Research conducted independently. \\
    \texttt{jesus.salas@gmail.com} | ORCID: \href{https://orcid.org/0009-0007-6411-2270}{0009-0007-6411-2270} \\
    A version of this paper will be submitted to a major theoretical computer science conference.
  }%
}
\date{} 
\newcommand{\icsubsetsum}{\textsc{IC-SubsetSum}}
\newcommand{\SSP}{\textsc{Subset--Sum}}
\theoremstyle{plain}
\newtheorem{theorem}{Theorem}
\newtheorem{lemma}{Lemma}
\newtheorem{proposition}{Proposition}
\newtheorem{conjecture}{Conjecture}
\newtheorem{definition}{Definition}
\theoremstyle{remark}
\newtheorem{remark}{Remark}
\begin{document}

\title{Certificate-Sensitive Subset Sum: Realizing Instance Complexity}

\maketitle

\begin{abstract}
The Subset Sum problem is a classical NP-complete problem with a long-standing $O^*(2^{n/2})$ deterministic bound due to Horowitz and Sahni. We present results at two distinct levels of generality. 

\textbf{First (instance-sensitive bound)}, we introduce, to our knowledge, the first deterministic algorithm whose runtime provably scales with the \emph{certificate size} $U = |\Sigma(S)|$, the number of distinct subset sums. Our enumerator constructs all such sums in time $O(U \cdot n^2)$, with a randomized variant achieving expected time $O(U \cdot n)$. This provides a constructive link to Instance Complexity by tying runtime to the size of an information-theoretically minimal certificate.

\textbf{Second (unconditional worst-case bound)}, by combining this enumerator with a double meet-in-the-middle strategy and a \emph{Controlled Aliasing} technique that enforces a simple canonical-normal-form (CNF) expansion policy on aliased states, we obtain a deterministic solver running in $O^*(2^{n/2-\varepsilon})$ time with $\varepsilon=\log_2(\frac{4}{3})$—the first unconditional deterministic improvement over the classical $O^*(2^{n/2})$ bound for \emph{all} sufficiently large $n$.

\textbf{Finally}, we refine fine-grained hardness for Subset Sum by making explicit the structural regime (high collision entropy / near collision-free) implicitly assumed by SETH-based reductions, i.e., instances with near-maximal $U$.
\end{abstract}


\section{Motivation}

\subsection*{This work offers four primary contributions, three of which are structural in nature.}

\begin{itemize}

\item \textbf{Instance-sensitive bound via certificate size $U$:} 
To our knowledge, this is the first deterministic algorithm for a canonical NP--complete problem whose runtime provably adapts to a \emph{minimal} constructive certificate, namely $\Sigma(S)$. The algorithm constructs this certificate online from scratch, in deterministic time $O(U \cdot n^2)$ (expected $O(U \cdot n)$ randomized), rather than merely deciding feasibility.

\item \textbf{Unconditional worst-case improvement:} 
By combining the enumerator with a double meet-in-the-middle strategy and \emph{Controlled Aliasing under Canonical Normal Form (CNF)}—i.e., a simple \emph{count-based expansion policy} at the alias indices (§\ref{sec:aliasing-speedup}, App.~\ref{appendix:alias})—\icsubsetsum{} achieves a worst-case runtime $O^*(2^{\,n/2-\varepsilon})$ with $\varepsilon=\log_2\!\left(\tfrac{4}{3}\right)$ for \emph{all} inputs, the first deterministic improvement over the classical $O^*(2^{n/2})$ Horowitz--Sahni bound in nearly 50 years.

\item \textbf{Structural reframing of fine-grained hardness:} 
Many fine-grained reductions to \SSP{} implicitly target \emph{high-entropy}, near collision-free instances (i.e., $U \approx 2^{n}$). \icsubsetsum{} makes this structural dependency explicit, showing that hardness in the standard reductions aligns with the regime of near-maximal distinct sums.

\item \textbf{A generic design template for certificate-sensitive algorithms:} 
By reinterpreting dynamic programming as guided certificate traversal, \icsubsetsum{} offers a broadly applicable approach to adaptive enumeration, potentially extending to other NP--complete problems with wide certificate-size variance.

\end{itemize}

\section{Introduction}\label{sec:intro}

\paragraph{Worst--Case versus Per--Instance Analysis.}
The modern theory of algorithms is dominated by worst--case running--time
guarantees. While immensely successful, the paradigm sometimes fails to
explain the vast performance variance that practitioners observe between
individual inputs of the \emph{same} size. A complementary research line,
initiated by Orponen–Ko–Schöning–Watanabe~\cite{orponen1994instance}, formalised \emph{Instance Complexity}
(IC): the intrinsic difficulty of a \emph{single} instance is the
bit--length of the shortest “special--case program“ that decides it
within a time bound~$t$.

\smallskip
At first glance, however, IC seems paradoxical: for each among infinitely
many possible inputs we would have to synthesise a bespoke decision
program \emph{before} seeing that input, with no \emph{a priori}
structural knowledge. Lacking such a generative recipe, researchers long
treated IC as a purely existential benchmark—useful for lower bounds but
incompatible with efficient algorithms. \icsubsetsum{} challenges this
perception by constructing the certificate \emph{online} and bounding
its runtime by a provable function of the certificate length.

\paragraph{The \SSP{} Problem.}
In the canonical \SSP{} problem we are given a multiset
$S=\{a_1,\ldots,a_n\}\subseteq\mathbb{Z}_{>0}$ and a target
$t\in\mathbb{Z}_{>0}$. The task is to decide whether some submultiset
sums to~$t$. \SSP{} is NP--complete and underpins numerous reductions
across combinatorics, cryptography, and fine--grained complexity.

\paragraph{Classical Algorithms.}
Two textbook techniques illustrate the gulf between worst--case and
per--instance behaviour:
\begin{enumerate}[label=\alph*.]
\item \emph{Bellman Dynamic Programming} takes $O(nt)$
        time—pseudopolynomially efficient when $t$ is small, but infeasible
        for large numeric ranges\,\cite{bellman1957dp}. While a small target $t$ naturally constrains $U$, the reverse is not true; our approach remains efficient even for large $t$ provided that $U$ is small due to additive structure.

  \item \emph{Horowitz--Sahni Meet--in--the--Middle} enumerates all
        $2^{n/2}$ subset sums of each half of $S$ and intersects the two
        lists in $O^*(2^{n/2})$ time\,\cite{horowitz1974subset}.
        Decades of work have failed to beat the $2^{n/2}$ factor in the
        worst case\,\cite{bringmann2024subsetsumunconditional,openproblems_woeginger}.
\end{enumerate}

Yet practitioners observe that real-data instances (e.g., knapsack crypto, bin-packing logs)
are often much easier: many different subsets \emph{collide} to the same value, so the
number of distinct subset sums $U=|\Sigma(S)|$ is far smaller than $2^{n}$.
This \emph{additive redundancy}—small $U$ due to duplicates, near–arithmetic progressions,
or other structure—is exactly what analyses that track only $n$ and $t$ overlook.

\paragraph{Bridging Theory and Practice with \icsubsetsum{}.}
We address this gap. Let
$\Sigma(S)=\{\sum_{i\in T}a_i : T\subseteq[n]\}$ be the set of
\emph{distinct} subset sums and write $U=|\Sigma(S)|$. Because deciding
$(S,t)$ reduces to a simple membership query once $\Sigma(S)$ is known, this list is an
\emph{information--theoretically minimal certificate}. Our new algorithm
\icsubsetsum{} deterministically enumerates every element of $\Sigma(S)$
\emph{exactly once}, prunes duplicates on the fly, and halts in
deterministic worst-case time $O(U \cdot n^2)$. We further show this can be improved to an expected time of $O(U \cdot n)$ using randomization.
Thus its runtime is provably bounded
by a function of the certificate size, marking the first algorithm
for \SSP{} whose complexity scales with the true structural
difficulty of the input. We build upon the algorithm initially introduced in the `Beyond Worst-Case Subset Sum' framework~\cite{bwss2025}, now formalized under the \icsubsetsum{} model.

\paragraph{Contributions.}
Our work makes four contributions:
\begin{itemize}
  \item \textbf{Certificate--sensitive enumeration.} We design and analyse
        \icsubsetsum{}, the first deterministic algorithm to construct the certificate $\Sigma(S)$ with a runtime of $O(U \cdot n^2)$ that adapts to the instance's structure. We also present a randomized variant that achieves an expected runtime of $O(U \cdot n)$.

  \item \textbf{A guaranteed worst--case improvement.} We prove that
        \icsubsetsum{} runs in $O^*(2^{\,n/2-\varepsilon})$ time, with $\varepsilon=\log_2(\frac{4}{3})$,
        by combining double meet-in-the-middle with \emph{Controlled Aliasing} under a fixed
        count-based canonical expansion policy (CNF; see §\ref{sec:aliasing-speedup}, App.~\ref{appendix:alias}).
        This yields a deterministic speedup over classical meet-in-the-middle on \emph{all} inputs.

  \item \textbf{Refined lower--bound methodology.} We formalise a
        “collision--free’’ promise variant of \SSP{} and show that
        classical ETH/SETH reductions only rule out $2^{n/2}$ algorithms
        on that structure--resistant slice. Future reductions must
        certify low collision entropy.

  \item \textbf{Template for other problems.} We discuss how the
        certificate--sensitive viewpoint can guide new algorithms for
        knapsack, partition, and beyond.
\end{itemize}

\paragraph{Advantage over Dynamic Programming.}
The certificate-sensitive bound provides an exponential advantage over classical dynamic programming (DP) on instances with large numeric range but low structural complexity. For example, consider an instance $S$ with $n - 1$ elements equal to $1$, and one large element $L = 2^n$. For a target $t = L$, the DP runtime is $O(n \cdot t) = O(n \cdot 2^n)$. In contrast, the number of unique subset sums for $S$ is only $U = O(n)$, yielding a polynomial runtime of $O(U \cdot n^2) = O(n^3)$ for \icsubsetsum{}.
\smallskip
Furthermore, the underlying enumeration framework supports anytime and online operation, allowing for interruption and incremental updates; these features are detailed in our companion technical 
paper~\cite{bwss2025}.

\smallskip

Unlike prior algorithms whose performance depends on less precise proxies (e.g., numeric range, pseudo-polynomial bounds, or number of solutions), our solver’s runtime is provably governed exactly by $U = |\Sigma(S)|$ for every instance—the first such result treating $U$ as a formal instance-complexity certificate.

\paragraph{Relation to Companion Papers.}
This paper lays the foundational theoretical groundwork for the Certificate-Sensitive framework, building upon the algorithmic and empirical results presented in our technical companion paper~\cite{bwss2025}. The framework's approach is shown to be broadly applicable in two other companion papers, released concurrently, that extend it to the 0--1 Knapsack problem~\cite[forthcoming]{ic-knapsack} and the 3-SAT problem~\cite[forthcoming]{ic-3sat}.

\paragraph{Organisation.}
Section~\ref{sec:prelim} introduces preliminaries. 
Section~\ref{sec:certificate-framework} presents our first contribution: a certificate-sensitive framework that algorithmically realizes Instance Complexity. 
Section~\ref{sec:subexp-solver} develops our second contribution, a deterministic Subset Sum solver with a worst-case runtime below the classical $2^{n/2}$ bound. 
Section~\ref{sec:finegrained-hardness} contributes a refined structural perspective on fine-grained hardness and its implicit assumptions. 
Section~\ref{sec:empirical-validation} provides empirical validation. 
Finally, Sections~\ref{sec:related-work}--\ref{sec:conclusion} place our results in broader context and outline directions for future research.

\section{Preliminaries}\label{sec:prelim}

\paragraph{Computational Model.}
We adopt the standard Word–RAM model with word size $w=\Theta(\log M)$, where $M:=\max\{\sum_{a\in S} a,\; t\}$, so that arithmetic, comparisons, and memory accesses on $w$-bit words take $O(1)$ time. Equivalently, $w$ is large enough that every subset sum and $t$ fit in a single word, so additions and comparisons on sums are unit-cost. All logarithms are base two unless stated otherwise.

\paragraph{Randomized Model.}
Our randomized variants rely on standard \emph{2-universal} hashing. We assume that evaluating a hash function and handling collisions can be done in expected $O(1)$ time in the Word–RAM model. For any $x\neq y$, $\Pr_{h\sim\mathcal{H}}[h(x)=h(y)] \le 1/n^{c}$ for a fixed constant $c \ge 1$, and expectations are taken over the random draw of $h$. Collisions are resolved by exact bitmask comparison, ensuring \emph{Las Vegas} correctness: the output is always correct, and the runtime bound holds in expectation.

\paragraph{Asymptotic Notation.}
We use $O^*(\cdot)$ and $o^*(\cdot)$ to suppress factors polynomial in $n$ and $\log M$, unless explicitly shown.

\paragraph{Notation.}
Let $(S,t)$ be a \textsc{Subset Sum} instance, where $S=\{a_1,\ldots,a_n\}\subseteq\mathbb{Z}_{\ge 0}$ is a multiset of $n$ nonnegative integers and $t\in\mathbb{Z}_{\ge 0}$ is a nonnegative target. We write $[n] = \{1, 2, \ldots, n\}$ and let
\[
\Sigma(S) = \left\{ \sum_{i \in T} a_i \;\middle|\; T \subseteq [n] \right\}
\]
denote the set of \emph{distinct subset sums}. The empty sum is included by convention, so $0 \in \Sigma(S)$. Let $U := |\Sigma(S)|$.  
Throughout this paper, we consider each sum $\sigma \in \Sigma(S)$ to be implicitly paired with its canonical prefix representation (i.e., its lexicographically minimal index mask).

\paragraph{Subset Sum Notation.}
For any subset \( A \subseteq S \), we write \( \sigma(A) \) to denote the sum of its elements:
\[
\sigma(A) := \sum_{a \in A} a.
\]
This distinguishes numerical subset sums from the set of all distinct sums, denoted \( \Sigma(S) \).

\begin{definition}[Distinct Subset Sums]
The quantity $U = |\Sigma(S)|$ denotes the number of distinct subset sums of $S$. Trivially, $1 \le U \le 2^n$. The lower bound is met when $S$ is empty, and the upper bound is met when all $2^n$ subset sums are unique.
\end{definition}

This parameter $U$ will serve as our central structural complexity measure. It governs both the algorithmic behavior of \icsubsetsum{} and the size of the certificate it produces.

\paragraph{Collision Entropy.}
We define the (base-2) \emph{collision entropy} of $S$ as
\[
H_c(S) := n - \log_2 |\Sigma(S)| = n - \log_2 U.
\]
This quantity measures the compressibility of the subset sum space. Intuitively, $H_c(S)$ captures how much smaller $\Sigma(S)$ is than the full power set: when many subsets collide onto the same sum, $U \ll 2^n$ and $H_c(S)$ is large; when most sums are unique, $U \approx 2^n$ and $H_c(S)$ is near zero.

\begin{itemize}
  \item If all $a_i = 1$, then every subset sum lies in $\{0, 1, \ldots, n\}$ and $U = n + 1 = O(n)$, so $H_c(S) = n - \log_2 n = \Theta(n)$.
  \item If $S$ is constructed to be collision-free (e.g., a superincreasing sequence), then $U = 2^n$, so $H_c(S) = 0$ and no structure is exploitable; \icsubsetsum{} runs in worst-case time.
\end{itemize}

Collision entropy may be informally viewed as a coarse proxy for the Kolmogorov compressibility of the subset sum landscape: the more structured the set, the lower its information content.
\smallskip

\noindent\textit{Note.} 
We use the unnormalized form of collision entropy $H_c(S) = n - \log_2 U$, which is standard in information theory; the normalized form $H_c(S)/n$ differs only by a multiplicative constant and is not required for our purposes.

\paragraph{Role of $U$ in Complexity.}
The parameter $U = |\Sigma(S)|$ governs both the structural difficulty of a \textsc{Subset Sum} instance and the size of its minimal certificate. If $\Sigma(S)$ is known, then deciding whether some $T \subseteq S$ sums to a target $t$ reduces to checking whether $t \in \Sigma(S)$—a constant-time membership query. Thus, $U$ captures the certificate length for the \emph{decision} variant.

For the \emph{constructive} variant, a solution subset must be recovered; here, each subset sum must be paired with a canonical encoding of the realizing subset (e.g., a bitmask). Our algorithm, \icsubsetsum{}, builds this richer certificate explicitly, and its runtime and space scale with $U$ and $n$. Designing algorithms whose performance adapts to $U$, rather than worst-case size $2^n$, is the central goal of this work.

\paragraph{Certificates for Subset Sum.}
Once the set $\Sigma(S)$ is known, deciding whether there exists a subset of $S$ summing to a target $t$ reduces to checking if $t \in \Sigma(S)$. Thus, $\Sigma(S)$ serves as an \emph{information-theoretically minimal certificate} for the \emph{decision} version of \textsc{Subset Sum}: no smaller object suffices to resolve all yes/no queries about subset-sum feasibility.

However, for the \emph{constructive} version—retrieving an actual subset $T \subseteq S$ such that $\sum_{i \in T} a_i = t$—this is no longer sufficient. In this case, each sum must be paired with a compact encoding of a corresponding witness subset. Our algorithm constructs such a certificate by maintaining, for each $\sigma \in \Sigma(S)$, a canonical bitmask representing the lex-minimal subset that realizes $\sigma$.

\begin{itemize}[leftmargin=1.5em]
    \item The \textbf{decision certificate} is $\Sigma(S)$, with total length $O(U \cdot \log(n \cdot \max(S)))$ bits if each sum is represented in binary.
    \item The \textbf{constructive certificate} is the set of $(\sigma, \text{bitmask})$ pairs, with total size $O(U \cdot n)$ bits.
\end{itemize}

This distinction underlies the runtime and space guarantees of our algorithm, which produces the stronger constructive certificate online. All references to ``certificate size'' henceforth will clarify which variant is being discussed.

Given a target $t$, if a solution exists, a witness is returned in additional $O(n)$ time from our maintained encodings.
Unless otherwise stated (e.g., in decision-variant optimizations), all runtime bounds are for full constructive-certificate enumeration of $\Sigma(S)$,
in which each $\sigma \in \Sigma(S)$ is stored with its canonical (lexicographically minimal) witness bitmask.
This invariant is maintained throughout, ensuring correctness even on high-density instances.

\paragraph{Instance Complexity (IC).}
Orponen–Ko–Schöning–Watanabe Instance Complexity~\cite{orponen1994instance} formalises the idea that hard problems can have easy instances—even without randomization or approximation. The IC of an instance $x$ under time bound $t$ is defined as the bit-length of the shortest program $P$ that outputs the correct answer on $x$ within time $t$. It is denoted $\mathsf{IC}_t(x)$.

For \SSP{}, the list $\Sigma(S)$ plays a dual role. It is not only a certificate but also a low-complexity proxy for a correct decision procedure. Once $\Sigma(S)$ is known, any target $t$ can be resolved in time $O(1)$ via a membership query. Hence, $\Sigma(S)$ encodes a special-case program of length $O(U \log(n \cdot \max(S)))$, yielding a concrete upper bound on $\mathsf{IC}_t((S,t))$. When the goal is to construct a solution, the richer certificate consisting of $(\sigma, \text{bitmask})$ pairs provides a special-case program of length $O(U \cdot n)$. For a formal IC program that satisfies partial correctness on \emph{all} inputs, we pair $\Sigma(S)$ with an input-checking wrapper; see §\ref{sec:ic-background}.

\paragraph{Operational Parameters for Halves.}
When $S$ is split into two halves $\ell_0,\ell_1$ (sizes within $\pm 1$), we write
$U_0 := |\Sigma(\ell_0)|$, $U_1 := |\Sigma(\ell_1)|$, and $\widehat U := \max\{U_0,U_1\}$.
When we state bounds that scale with $U$, we refer to full enumeration of $\Sigma(S)$; for meet-in-the-middle solvers operating on halves, the governing parameter is $\widehat U$.

\paragraph{Controlled Aliasing Convention.}
In the \emph{Controlled Aliasing} rule, each half designates an \emph{arbitrary} ordered pair $(x_0,x_1)$ of distinct elements at indices $(i_0,i_1)$ and treats occurrences of $x_1$ as $x_0$ during subset-sum generation for that half. We will key memo entries by a 2-lane identifier $(\tilde{\sigma},\chi)$, where $\tilde{\sigma}$ is the aliased sum and $\chi\in\{0,1\}$ records whether $x_1$ is present; witness selection remains lexicographic within each lane. Unless otherwise stated, the choice of alias pairs is independent of the target $t$ (and of any randomness), and all guarantees hold for \emph{all} targets $t$.

\emph{Count-based canonical normal form (CNF).} We restrict expansions to canonical aliased states using a fixed, target-independent policy: for a bitmask $b$, let $c(b):=b[i_0]+b[i_1]\in\{0,1,2\}$ and $\chi(b):=b[i_1]\in\{0,1\}$. A newly discovered state with mask $b$ is enqueued for extension iff
\[
(c(b)=0)\ \lor\ (c(b)=2)\ \lor\ \big(c(b)=1\ \wedge\ \chi(b)=0\big).
\]
This CNF policy preserves completeness and enables the $3/4$ expansion accounting; see §\ref{sec:aliasing-speedup} and App.~\ref{appendix:alias} (Lemmas~\ref{lem:expansion-accounting}–\ref{lem:policy-safety}).

\paragraph{Trivial and Degenerate Cases.}
Instances with $n=0$ or $t=0$ are handled in $O(1)$ time. If $U=1$ (e.g., all subset sums collide), enumeration and decision terminate immediately. Duplicates in $S$ are handled natively by our canonical memoization and require no additional promises.

\paragraph{Instance Assumptions.}
We consider the \textsc{Subset Sum} problem over a multiset $S = \{a_1,\dots,a_n\}$ of nonnegative integers ($a_i \in \mathbb{Z}_{\ge 0}$) and an integer target $t \ge 0$. 
Elements are not required to be distinct; duplicates are treated as separate items. 
We assume without loss of generality that $n \ge 1$ and that $t \le \sum_{a \in S} a$ (instances with $t < 0$ or $t > \sum S$ are decided trivially). 
Zero-valued elements may appear and are processed without special handling. While they do not generate new sum values (leaving $U = |\Sigma(S)|$ unchanged), they can introduce additional witness subsets for existing sums. The algorithm's correctness is unaffected, as its canonicalization logic correctly identifies the lexicographically minimal witness in all cases.
Even if $t=0$, our algorithms enumerate the full constructive certificate for $\Sigma(S)$, including the canonical witness for every achievable sum, unless the run is terminated early. All algorithms operate under the Word--RAM model described above, with $w = \Theta(\log M)$ where $M := \max\{\sum_{a \in S} a,\; t\}$.

\section{Certificate-Sensitive Framework}
\label{sec:certificate-framework}

\subsection{Unique-Subset-Sum Enumerator}
\label{sec:subset-sum-enumerator}

Our algorithm begins by generating $\Sigma(S)$, the set of all distinct subset sums of $S$. The core challenge is to do this \emph{without duplication}—i.e., without computing the same sum multiple times via different subsets. We use an ordered traversal strategy based on prefix extensions together with the following invariant.

\invariant{(Suppression-on-overwrite).}\label{inv:suppress}
Whenever a canonical representative for a numeric sum $\sigma$ is replaced by a lexicographically smaller mask, the displaced representative is \emph{marked non-expandable} and never extended further. This guarantees each canonical sum is expanded exactly once.

\paragraph{A minimal recursive formulation (for clarity).}
Before the optimized column-wise implementation (Appendix~\ref{app:column-enum}), it helps to view the algorithm abstractly:

\begin{algorithm}[H]
\caption{\textsc{AbstractEnumerateUnique}$(S)$}
\label{alg:abstract-enum}
\begin{algorithmic}[1]
\State \texttt{Memo} $\gets \{\,0 \mapsto \texttt{empty\_bitmask}\,\}$
\Function{Extend}{$\sigma, b, j$} \Comment{$b$ is a bitmask; $j$ is next index}
  \For{$i=j$ \textbf{to} $n$}
    \State $\sigma' \gets \sigma + a_i$;\quad $b' \gets b$ with bit $i$ set
    \If{$\sigma' \notin \texttt{Memo}$}
      \State \texttt{Memo}[$\sigma'$] $\gets b'$
      \State \Call{Extend}{$\sigma', b', i{+}1$}
    \ElsIf{\Call{IsLexicographicallySmaller}{$b'$, \texttt{Memo}[$\sigma'$]}}
      \State \texttt{Memo}[$\sigma'$] $\gets b'$ \Comment{overwrite with lex-min representative}
      \State \Call{Extend}{$\sigma', b', i{+}1$}
    \EndIf
  \EndFor
\EndFunction
\State \Call{Extend}{$0, \texttt{empty\_bitmask}, 1$}
\State \Return \texttt{Memo}
\end{algorithmic}
\end{algorithm}

\noindent This recursive view exposes the two core operations: (i) \emph{emit-or-overwrite} a representative for a sum, and (ii) \emph{extend} only from the (current) representative. The optimized frontier-based implementation in Appendix~\ref{app:column-enum} enforces Invariant~\ref{inv:suppress} explicitly via a \texttt{doNotExtend} flag so that an overwritten representative is never expanded again.

\medskip
We emphasize that the goal of the enumerator is not merely to determine feasibility, but to generate the full constructive certificate as defined in Section~\ref{sec:prelim}. For each unique subset sum $\sigma \in \Sigma(S)$, the algorithm records the lexicographically minimal bitmask of a subset realizing $\sigma$. This enables efficient reconstruction of witnesses and is necessary for correctness in constructive queries. Consequently, the runtime and space of the enumerator scale with the size of this richer certificate. Full implementation details and pseudocode are provided in Appendix~\ref{app:column-enum}. (When Controlled Aliasing is enabled, the same enumeration principles apply per half with 2-lane memoization; see §\ref{sec:aliasing-speedup} and App.~\ref{appendix:alias}.)

\paragraph{Canonical Prefixes for High-Density Instances.}
In high-density instances, where many distinct subsets may sum to the same value, a naive “first-prefix-wins” pruning strategy can fail. To guarantee correctness, our algorithm resolves collisions deterministically by defining a \emph{canonical prefix} for each sum—the prefix whose index mask is lexicographically minimal among all subsets producing that sum (compare bits from $1$ to $n$, preferring $0<1$ at the first difference). By retaining only this canonical representation when multiple extensions lead to the same sum, we suppress redundant work and ensure each unique sum is discovered via a single, well-defined path.

\paragraph{Bitmask Representation.}
The algorithm represents each prefix by a bitmask of length $n$, where the $i$-th bit indicates whether $a_i$ is included. Bitmask comparison to determine the lex-minimum takes $O(n)$ time. The total number of unique sums is $U$, so we store at most $U$ bitmasks.

\paragraph{Separation of concerns.}
Conceptually, the enumerator factors into three orthogonal pieces: (i) \emph{enumeration} (systematically proposing extensions), (ii) \emph{canonicality tests} (lexicographic comparisons to select representatives), and (iii) \emph{memoization} (storing one representative per numeric sum). Invariant~\ref{inv:suppress} links (ii) and (i), ensuring that only the current representative ever generates successors.

\begin{theorem}[Deterministic Runtime of the Enumerator]
\label{thm:enum-det-runtime}
Under the computational model of Section~\ref{sec:prelim}, 
the deterministic enumerator computes all $U = |\Sigma(S)|$ unique subset sums 
in $O(U \cdot n^2)$ time and $O(U \cdot n)$ space.
\end{theorem}

\begin{proof}
The runtime is dominated by the prefix extension and deduplication process.  
From each of the $U$ states, the algorithm attempts to extend its prefix with at most $n$ elements, resulting in $O(U \cdot n)$ extension attempts.  
In the event of a sum collision, the algorithm must compare the existing and candidate $n$-bitmasks to select the canonical representative, costing $O(n)$ time.  
Invariant~\ref{inv:suppress} ensures that an overwritten representative is never expanded further.  
Multiplying $O(U \cdot n)$ extension attempts by $O(n)$ comparison cost gives the stated $O(U \cdot n^2)$ runtime bound; space is $O(U \cdot n)$ for storing all canonical bitmasks.
\end{proof}

\begin{remark}[Degenerate Cases]
The $O(U \cdot n^2)$ bound in Theorem~\ref{thm:enum-det-runtime} is a worst-case guarantee. For instances with very low structural complexity (e.g., $U=1$ for an empty set, or $U=O(n)$ for a set of identical elements), this bound correctly implies a fast, polynomial runtime.
\end{remark}

\begin{theorem}[Randomized Runtime of the Enumerator]
\label{thm:enum-rand-runtime}
Let $U = |\Sigma(S)|$.  
Assume access to a $2$-universal hash family $\mathcal{H}$ with $O(1)$ evaluation time.  
Then a randomized variant of the enumerator computes all $U$ unique subset sums in expected time $O(U \cdot n)$, where the expectation is over the random draw of $h \sim \mathcal{H}$ as specified in Section~\ref{sec:prelim}.
\end{theorem}

\begin{proof}
Assume the randomized model of Section~\ref{sec:prelim}.  
With $h$ drawn uniformly from a $2$-universal family, the probability that two distinct bitmasks yield the same hash is at most $1/\mathrm{poly}(n,U)$. Consequently, the full $O(n)$ lexicographic comparison is needed only on (rare) hash-collision events, giving an expected $O(1)$ time per canonicality check (Las Vegas correctness). Since there are $O(U \cdot n)$ extension attempts (Theorem~\ref{thm:enum-det-runtime}), this yields the stated expected runtime bound.
\end{proof}

\subsection{Discussion and Conditional Optimality}

\paragraph{A Local Deduplicating Enumeration Model.}
We define a natural model of computation that captures the core constraints of real-time certificate construction. The \emph{local deduplicating enumeration model} assumes that the algorithm:
\begin{itemize}
    \item maintains a growing set $P$ of seen prefixes, each encoding a valid subset sum and its associated path;
    \item in each round, selects a candidate prefix $p \in P$ and a next element $a_i$ to extend it with;
    \item uses local information about $p$ and $a_i$ to decide whether to emit the new sum or prune it.
\end{itemize}
In this setting, the main computational bottleneck arises in testing whether two prefixes yield the same sum and comparing their canonicality.

\paragraph{A Conditional Lower Bound.}
We now formalize a mild but powerful conjecture.

\begin{conjecture}
\label{conj:local-lb}
In the local deduplicating enumeration model, any algorithm that emits all $U$ distinct subset sums must take $\Omega(U \cdot n)$ time in the worst case.
\end{conjecture}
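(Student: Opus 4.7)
The plan is to establish the bound through an adversary argument in a bit-level refinement of the local deduplicating enumeration model, showing that correct canonical emission forces $\Omega(n)$ elementary operations per emitted sum. The first step is to sharpen ``local information'' so that every bit of a stored prefix (or of its implied sum) that the algorithm reads or writes during a round costs one unit of time, ruling out constant-time $n$-bit comparisons via word-level parallelism. A natural formalization is the cell-probe or pointer-machine variant of the model, where an $n$-bit prefix is atomic only at a cost of $\Omega(n)$ per read.

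Second, I would exhibit an adversarial instance family in which $U$ is large but every sum $\sigma \in \Sigma(S)$ is realized by many distinct subsets, so that canonicality is genuinely contested. A concrete candidate is $S$ composed of $n/2$ copies of the value $1$ together with a superincreasing tail of length $n/2$: here $U = \Theta(n \cdot 2^{n/2})$, and each sum admits $\Theta(\binom{n/2}{k})$ realizing prefixes for the appropriate $k$. Against any deterministic local enumerator, an adaptive adversary can order the exploration so that whichever prefix the algorithm currently holds as canonical for a given sum can be displaced by an unexamined extension; the algorithm is thus forced either to inspect $\Omega(n)$ bits before committing the sum, or to revisit it in later rounds, with equivalent amortized cost.

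Third, I would aggregate: the adversary forces $\Omega(n)$ bit-level operations per emitted sum, and there are exactly $U$ emissions, yielding $\Omega(Un)$ total time. This matches the randomized bound of Theorem~\ref{thm:enum-rand-runtime} up to constants and the deterministic bound of Theorem~\ref{thm:enum-det-runtime} up to a single factor of $n$, certifying that both enumerators are conditionally optimal within the model.

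The main obstacle is the adversary step: one must argue that \emph{any} local decision rule can be fooled into spending $\Omega(n)$ operations per sum, even if it maintains arbitrarily sophisticated summaries of past prefixes. A clean route is a reduction to the decision-tree or communication complexity of maintaining the lex-minimum of an adversarial stream of $n$-bit strings, invoked once per unique sum. A fallback, should that reduction prove delicate, is to prove the conjecture first in the pointer-machine model, where each canonical prefix must be distinguished as a pointer-level object of intrinsic size $\Omega(n)$ and the bound is almost immediate.
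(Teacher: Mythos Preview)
The statement is presented in the paper as an open \emph{conjecture}; the paper offers no proof whatsoever and uses it only as the hypothesis of Theorem~\ref{thm:conditional-opt}. There is therefore no paper proof to compare your proposal against---you are attempting to settle something the authors explicitly leave open.

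Viewed as a research plan, your sketch is plausible but has gaps beyond the obstacle you already flag. First, the paper's ``local deduplicating enumeration model'' does not fix a bit-level cost model, so your cell-probe or pointer-machine refinement proves a lower bound in a \emph{different} model; you would need to argue that the refinement is forced by the paper's informal phrase ``local information about $p$ and $a_i$,'' which is not obvious. Second, and more seriously, your adversary attacks the maintenance of lex-minimal canonical prefixes, but the conjecture only requires that the algorithm \emph{emit} each distinct sum; nothing in the stated model obliges an algorithm to track or compare canonical witnesses. An algorithm that simply hashes each newly generated sum and tests set membership never performs a bitmask comparison and is untouched by your canonicality-contest adversary. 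A sound lower bound would have to show that even the decision ``is this sum new?'' or the act of writing the sum down costs $\Omega(n)$ per emission in the model---which is closer to an output-size or information-theoretic argument on the sums themselves than to the lex-minimum-of-a-stream reduction you propose.
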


This conjecture posits that the $O(U \cdot n)$ benchmark is the best we can hope for in general, even when allowing randomization. It reflects the intuition that each solution must be “seen” at $n$ bits of resolution to decide whether to include it.

\begin{theorem}[Conditional Optimality of Randomized \textsc{IC-SubsetSum}]
\label{thm:conditional-opt}
If Conjecture~\ref{conj:local-lb} holds, then the expected $O(U \cdot n)$ runtime of the randomized \textsc{IC-SubsetSum} algorithm is optimal in expectation within the local deduplicating model.
\end{theorem}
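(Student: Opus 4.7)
The plan is a straightforward matching argument: establish the upper bound via Theorem~\ref{thm:enum-rand-runtime}, invoke Conjecture~\ref{conj:local-lb} for the lower bound, and then argue that the two bounds coincide up to constants once we verify that the randomized enumerator is a legal algorithm in the local deduplicating model.

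First I would check model membership. The enumerator of Section~\ref{sec:subset-sum-enumerator} maintains a set $P$ of canonical prefixes (one per distinct sum discovered so far), and in each round it picks some $p \in P$ together with a next index $a_i$, decides locally (via a hash-based comparison against any colliding stored prefix) whether the resulting sum is new or already represented by a lex-smaller mask, and either emits it or prunes. Each of these steps uses only information about $p$ and $a_i$ plus the bookkeeping table, which is exactly what the model permits. Hence randomized \icsubsetsum{} is an admissible algorithm in the local deduplicating enumeration model.

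Next I would apply Theorem~\ref{thm:enum-rand-runtime} to conclude that on \emph{every} input $S$, the expected running time of randomized \icsubsetsum{} is at most $c \cdot U \cdot n$ for some absolute constant $c$. Assuming Conjecture~\ref{conj:local-lb}, there must exist an input $S^\star$ (of each size) on which every algorithm $\mathcal{A}$ in the model satisfies $\mathbb{E}[T_{\mathcal{A}}(S^\star)] \geq c' \cdot U(S^\star) \cdot n$ for some constant $c' > 0$. Combining these two estimates, the expected runtime of randomized \icsubsetsum{} is within a constant factor of the best achievable expected runtime on the worst input of each size, which is precisely the notion of optimality claimed in Theorem~\ref{thm:conditional-opt}.

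The main obstacle is the subtle asymmetry between a \emph{worst-case} lower bound and an \emph{expected} upper bound. To bridge it cleanly, I would interpret the conjecture as a statement about $\max_S \mathbb{E}[T_{\mathcal{A}}(S)]$ over any (possibly randomized) algorithm $\mathcal{A}$ in the model, which is the natural reading consistent with Yao's minimax framing: a worst-case lower bound against randomized algorithms is equivalent, via Yao, to the existence of a hard input distribution against deterministic ones. Under this reading the matching argument is immediate; if instead Conjecture~\ref{conj:local-lb} is read as a purely deterministic statement, I would either strengthen the conjecture in a short remark or restrict Theorem~\ref{thm:conditional-opt} to the deterministic variant of \icsubsetsum{} analyzed in Theorem~\ref{thm:enum-det-runtime}, where the same constants-matching argument goes through without subtlety.
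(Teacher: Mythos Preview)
Your proposal is correct, and in fact it is considerably more detailed than what the paper provides: in the paper, Theorem~\ref{thm:conditional-opt} is stated without any proof, evidently treated as an immediate consequence of Theorem~\ref{thm:enum-rand-runtime} together with Conjecture~\ref{conj:local-lb}. Your matching argument is exactly the intended one, but you go further by (i) explicitly verifying that the randomized enumerator is an admissible algorithm in the local deduplicating model, and (ii) flagging the deterministic-vs-randomized asymmetry and resolving it via a Yao-style reading of the conjecture. Both of these are points the paper simply elides, so your write-up would serve as a useful expansion rather than a departure.
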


\paragraph{Conclusion.}
To our knowledge, this is the first instance-sensitive enumeration algorithm for \textsc{Subset Sum} whose performance is provably tied to the certificate size $U$. Our randomized variant matches the conjectured optimal runtime of $O(U \cdot n)$ in expectation. A key open question remains whether a deterministic algorithm can also achieve this bound, or whether the additional $O(n)$ overhead for comparison—leading to an $O(U \cdot n^2)$ runtime—is inherent for any deterministic approach in this model.

\subsection{Constructive Link to Instance Complexity}
\label{sec:ic-background}

Instance complexity, introduced by Orponen–Ko–Schöning–Watanabe~\cite{orponen1994instance}, measures the minimum information needed to decide a single input instance $x$ of a decision problem.

\begin{definition}[IC, informal (partial correctness)]
The \emph{instance complexity} of $x \in \{0,1\}^n$ with respect to a language $L$ and time bound $t$ is the bit-length of the shortest program $P_x$ such that:
\begin{itemize}
    \item $P_x(x) = L(x)$ and halts within time $t(n)$; and
    \item for every input $y$ of size $n$, $P_x(y)$ either halts within $t(n)$ and outputs $L(y)$, or outputs $\bot$ (i.e., is allowed to be partially correct on non-target inputs).
\end{itemize}
\end{definition}

They showed that for some languages in NP, the IC of a random input can be exponentially smaller than any universal algorithm. Yet because IC is defined existentially, not constructively, it was long thought to offer no algorithmic advantage.

\begin{proposition}[from \cite{orponen1994instance}]
Let $L \in \mathsf{NP}$. Then for every polynomial-time verifier $V$ for $L$, the IC of a yes-instance $x \in L$ is at most the bit-length of its shortest witness.
\end{proposition}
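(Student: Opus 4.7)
The plan is to explicitly construct a short ``special--case program'' $P_x$ for the yes-instance $x$ by hard-coding a minimum-length witness and routing every invocation through the given verifier $V$. Since $L \in \mathsf{NP}$, there exists a polynomial-time verifier $V$ and a polynomial $q$ such that $x \in L$ if and only if there exists $w \in \{0,1\}^{\le q(|x|)}$ with $V(x,w)=1$. Because $x$ is a yes-instance, the set of accepting witnesses is nonempty, so we may fix $w^\star$ to be a witness of minimum bit-length and denote its length by $m := |w^\star|$.

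Next I would define $P_x$ as the program that, on any input $y$, simply computes and outputs $V(y, w^\star)$. The description of $P_x$ consists of (i) a fixed-size wrapper $M_V$ encoding the verifier $V$ together with the control logic ``read $y$, invoke $V$ on $(y, w^\star)$, return the result'' and (ii) the hard-coded string $w^\star$. Hence $|P_x| \le m + c$, where $c = |M_V|$ is a constant independent of $x$. This immediately yields the desired upper bound of $m$ bits, up to an additive constant absorbed by the IC convention.

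I would then verify the three defining conditions of instance complexity in turn. Correctness on $x$: by construction $P_x(x) = V(x, w^\star) = 1 = L(x)$. Time bound: since $V$ runs in time bounded by some polynomial $p(|y| + |w^\star|) \le p(|y| + q(|x|))$, the program $P_x$ halts on every input $y$ of size $n = |x|$ within a polynomial time budget $t(n)$, matching the IC convention from the definition. No correctness requirement is imposed on $P_x(y)$ for $y \ne x$, which is exactly what lets us get away with a single fixed witness.

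The only genuinely delicate point—more bookkeeping than obstacle—is aligning the time bound $t(n)$ in the IC definition with the running time of $V$. This is not a real obstruction: IC is customarily parameterized by a polynomial $t$, and choosing $t(n)$ to dominate $p(n + q(n))$ suffices. With that convention in place the construction goes through verbatim, establishing $\mathsf{IC}_t(x) \le m + O(1)$, which is the stated bound on the shortest witness length.
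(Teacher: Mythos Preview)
The paper does not provide its own proof of this proposition; it is simply quoted as a known result from \cite{orponen1994instance}. So there is nothing to compare against, and the relevant question is whether your argument stands on its own.

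It does. Hard-coding a shortest witness $w^\star$ into a wrapper that calls the fixed verifier $V$ is exactly the standard construction, and your verification of the three IC conditions is clean. The length accounting $|P_x| \le |w^\star| + O(1)$ is right, with the $O(1)$ absorbing the description of $V$ and the wrapper logic, and your handling of the time-bound alignment is the correct bookkeeping.

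One small remark worth making explicit: the paper's informal IC definition is a \emph{relaxed} one that permits $P_x(y)$ to be incorrect for $y \ne x$, and you correctly exploit this. Under the original Orponen--Ko--Sch\"oning--Watanabe definition (which the paper itself acknowledges a few paragraphs later), the program must be \emph{consistent} with $L$, i.e., output either $L(y)$ or $\bot$ on every $y$. Your program as written may output $0$ on some $y \in L$ that happens not to share the witness $w^\star$. The one-line fix is to have $P_x$ output $1$ if $V(y,w^\star)=1$ and $\bot$ otherwise; this preserves the length bound and restores consistency. Since the paper's stated definition is the relaxed one, your proof is correct as written in this context.
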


\textsc{IC-SubsetSum} realizes instance complexity in a relaxed but standard partial-correctness setting. To comply with the canonical definition above, the certificate $\Sigma(S)$ is paired with an input-checking wrapper that first verifies whether the \emph{input set} matches the \emph{specific set $S$} for which the certificate was generated. If the check fails, the program returns $\bot$; otherwise, it uses $\Sigma(S)$ to decide membership for the provided target $t$. This preserves correctness and incurs only an additive $O(|S|)$ term in program size. Thus, while the runtime remains governed by $U = |\Sigma(S)|$, the size of the canonical IC program is $O(U \cdot n + |S|)$.

\paragraph{\textsc{IC-SUBSETSUM} as a Certifier.}
As discussed above, the raw certificate $\Sigma(S)$ must be paired with an input-checking wrapper to form a canonical IC program. What distinguishes \icsubsetsum{} is that it does not merely verify membership in $\Sigma(S)$; it constructs the entire certificate from scratch. Our deterministic algorithm achieves this in $O(U \cdot n^2)$ worst-case time, while our randomized variant does so in $O(U \cdot n)$ expected time.

\paragraph{Implication.}
This gives new meaning to our certificate-sensitive runtime. It says not just that the algorithm is efficient, but that it implicitly performs instance-specific program synthesis. The certificate $\Sigma(S)$ serves as a compressed, self-contained representation of the computation 
needed to decide $(S,t)$, and \icsubsetsum{} acts as a just-in-time compiler for that program—whose runtime tracks its size.

While Instance Complexity is classically defined in a \emph{non-uniform} setting—allowing a different short program for each instance—\icsubsetsum{} provides a \emph{uniform}, deterministic algorithm that adapts to the structure of each input, synthesizing the decision procedure on the fly and thus resolving the long-standing tension between the theoretical power of IC and its historically non-constructive formulation.

\section{Solver with a Sub-\texorpdfstring{$2^{n/2}$}{2^{n/2}} Worst-Case Bound}\label{sec:subexp-solver}

\subsection{Double Meet-in-the-Middle Solver}

We now describe how to solve \SSP{} given only the enumerators for $\Sigma(S)$. The detailed pseudocode for the full solver is presented in Appendix~\ref{app:dmitm}.

\paragraph{Clarifying Certificate Scope.}
While the unique-subset-sum enumerator described above can be applied to the full input $S$ to construct the complete certificate $\Sigma(S)$ in time $O(U \cdot n^2)$, our solver employs a more efficient strategy. It splits $S$ into halves $\ell_0$ and $\ell_1$ and applies the enumerator separately to each side, yielding certificates $\Sigma(\ell_0)$ and $\Sigma(\ell_1)$. This avoids ever constructing $\Sigma(S)$ explicitly. The solver answers the query $t \in \Sigma(S)$ by testing cross-split combinations on the fly. It trades away post hoc queryability: to check a new target $t'$, the merge logic must be repeated. Throughout the paper, we use $U = |\Sigma(S)|$ as a global proxy for instance complexity, but emphasize that our solver's runtime depends operationally on $U_0$ and $U_1$.

\paragraph{Splitting the instance.}
Let $S = \ell_0 \cup \ell_1$ be a partition into left and right halves. We enumerate $\Sigma(\ell_0)$ and $\Sigma(\ell_1)$ using the prefix-unique method described above. Let $U_0 = |\Sigma(\ell_0)|$ and $U_1 = |\Sigma(\ell_1)|$. Note that the additive structure within each half can lead to a significant imbalance (e.g., $U_0 \ll U_1$), though this does not affect the overall asymptotic bound.

\paragraph{Solving and certifying.}
To decide whether any subset of $S$ sums to $t$, we check for each sum $x \in \Sigma(\ell_0)$ whether a corresponding match can be found in $\Sigma(\ell_1)$. This check is more comprehensive than a simple search for $t - x$, as it also considers complements and mixed cases to cover all solution structures, as detailed in the lemmas of Appendix~\ref{app:combine}. To certify all solutions, we can track all such valid $(x, y)$ pairs and output their associated bitmasks.

\begin{theorem}[Certificate-Sensitive Solver]
\label{thm:cert-sensitive}
Let $S$ be split into two halves $\ell_0$ and $\ell_1$ of sizes $\lfloor n/2 \rfloor$ and $\lceil n/2 \rceil$, and let $U_0 = |\Sigma(\ell_0)|$ and $U_1 = |\Sigma(\ell_1)|$ denote the number of distinct subset sums in each half. There exists a deterministic algorithm that solves \SSP{} in time
\[
O\big((U_0+U_1)\cdot n^2\big)
\]
and space
\[
O\big((U_0+U_1)\cdot n\big).
\]
A randomized Las Vegas variant achieves an expected runtime of
\[
O\big((U_0+U_1)\cdot n\big)
\]
with the same space bound.
\end{theorem}

\begin{proof}
The solver enumerates $\Sigma(\ell_0)$ and $\Sigma(\ell_1)$ in an \emph{interleaved} fashion, matching candidates on the fly rather than tabulating both halves in full before merging.

Without loss of generality let $U_0 \ge U_1$ (so $\ell_0$ is dominant). By the \emph{suppression-on-overwrite} invariant (§\ref{inv:suppress}), each canonical sum $\sigma \in \Sigma(\ell_0)$ is generated and expanded exactly once; this \emph{generation step} is the atomic work unit.

For each canonical $\sigma$ on $\ell_0$:
\begin{itemize}
  \item There are at most $O(n)$ \emph{successor attempts} (adding an unset index in the canonical mask).
  \item Each attempt performs $O(1)$ same-half memo probes and $O(1)$ cross-half probes (direct / complement / mixed; Appendix~\ref{app:combine}).
  \item If a same-half probe finds an existing representative for the same numeric sum, we perform a canonicality test between bitmasks. This costs $O(n)$ in the deterministic model (lexicographic mask comparison), and $O(1)$ in expectation in the randomized model (pairwise hashing to filter to a single lex-compare with constant probability).
\end{itemize}

When a collision leads to an overwrite (the new mask is lexicographically smaller), the previous representative is suppressed and never expanded; cross-half feasibility checks already performed for the displaced representative are not revisited. Thus every cross-half probe and every canonical comparison can be \emph{injectively charged} to the unique successor attempt that initiated it on the dominant side.

Pricing per attempt:
\begin{itemize}
  \item Deterministic: $O(1)$ probes + \underline{at most one} $O(n)$ canonical comparison on collision $\Rightarrow$ $O(n)$ per attempt, hence $O(U_0 \cdot n^2)$ total.
  \item Randomized: $O(1)$ expected time per canonical check $\Rightarrow$ $O(1)$ per attempt, hence $O(U_0 \cdot n)$ expected total.
\end{itemize}

Therefore,
\[
T(n)=O(\max\{U_0,U_1\}\cdot n^2)\quad\text{(deterministic)},\qquad
\mathbb{E}[T(n)]=O(\max\{U_0,U_1\}\cdot n)\quad\text{(randomized)}.
\]
Since $\max\{U_0,U_1\}\le U_0+U_1\le 2\max\{U_0,U_1\}$, this is asymptotically equivalent to the theorem’s $O\big((U_0+U_1)\cdot n^2\big)$ and $O\big((U_0+U_1)\cdot n\big)$ forms. We state the sum form to emphasize that both halves fully enumerate their unique sums, while the max form reflects the dominant-half accounting.
\end{proof}

\begin{remark}
The “dominant-half” view is often more intuitive: in heavily imbalanced instances ($U_0 \gg U_1$ or vice versa), the larger partial certificate drives the total cost. The sum form is preferable for theorem statements because it is conservative and avoids any suggestion that the non-dominant half is not fully enumerated.
\end{remark}

\subsection{Worst-Case Runtime via Controlled Aliasing}
\label{sec:aliasing-speedup}

In addition to its adaptive performance, \icsubsetsum{} guarantees a strict worst-case improvement over Horowitz--Sahni, even on collision-free instances. This is achieved via a deterministic \emph{Controlled Aliasing} rule: a structural redundancy technique that reduces the enumeration space without sacrificing correctness. We describe an enhanced version that ensures full correctness for both the \emph{decision} and \emph{constructive} variants of \SSP{}.

\paragraph{Aliasing Rule (distinct aliased values).}
Let $S$ be split into halves $\ell_0$ and $\ell_1$. In each half, select a pair of distinct elements—an \emph{alias pair}—e.g., $(x_0, x_1) \in \ell_0$. During enumeration we compute an \emph{aliased} value
\[
\tilde{\sigma}(P)\;:=\;\sigma(P)\;-\;\chi(P)\cdot (x_1-x_0),
\quad\text{where }\;\chi(P):=\mathbf{1}[x_1\in P].
\]
Thus $\tilde{\sigma}$ replaces $x_1$ by $x_0$ in the sum while preserving the subset’s bitmask identity, and the true sum is recovered by $\sigma(P)=\tilde{\sigma}(P)+\chi(P)\cdot(x_1-x_0)$. This mapping collapses the four inclusion patterns on $\{x_0,x_1\}$ into at most three \emph{distinct aliased sums} per base subset, so the number of distinct aliased sums per half is at most
\[
\tfrac{3}{4}\cdot 2^{k}\quad \text{for } k=|\ell_i|,
\]
as shown in Lemma~\ref{lem:alias-reduction} (Appendix~\ref{appendix:alias}). This is a \emph{counting} statement about values; the runtime bound will follow once we restrict which states are \emph{expanded} (CNF below).

\paragraph{Canonical Aliasing via 2-Lane Memoization.}
To preserve correctness under aliasing, we maintain, for each aliased value $\tilde{\sigma}$, two canonical entries indexed by the correction tag $\chi\in\{0,1\}$:
\[
\texttt{Memo}[\tilde{\sigma}] \;\in\; \big(\{\bot\}\cup\{0,1\}^n\big)^2.
\]
Each bucket stores the lexicographically minimal witness for its $(\tilde{\sigma},\chi)$ key. The buckets are disjoint: a new prefix competes only within its $\chi$-bucket for lexicographic minimality, so all semantically distinct witnesses are retained.

\paragraph{Aliased Canonical Normal Form (CNF).}
Fix an alias pair $(x_0,x_1)$ on indices $(i_0,i_1)$ in each half. Define a normalization map
\[
\mathsf{canon}(b) :=
\begin{cases}
b & \text{if } b[i_0]+b[i_1]\in\{0,2\}\ \text{or}\ (b[i_0],b[i_1])=(1,0),\\
b' & \text{if } (b[i_0],b[i_1])=(0,1),
\end{cases}
\]
where $b'$ is $b$ with the pair $(i_0,i_1)$ flipped from $(0,1)$ to $(1,0)$. The algorithm explores only states with $b=\mathsf{canon}(b)$ (canonical states); non-canonical states may be stored as witnesses but are flagged non-expandable. This quotients the search space by the alias-induced redundancy and is scheduler-agnostic (FIFO/LIFO/priority).

\paragraph{Expansion Policy (CNF, inlined).}
Equivalently, write $c(b):=b[i_0]+b[i_1]\in\{0,1,2\}$ and $\chi(b):=b[i_1]\in\{0,1\}$. A newly discovered state with mask $b$ is enqueued for extension iff
\[
(c(b)=0)\ \lor\ (c(b)=2)\ \lor\ \big(c(b)=1\ \wedge\ \chi(b)=0\big).
\]
States with $(c,\chi)=(1,1)$ are recorded in the appropriate bucket but \emph{not} expanded. This explicit policy yields the $3/4$ expansion accounting and is target-independent.

\paragraph{Inline Key Invariants.}
We use the following invariants in the main bound; proofs and extended details appear in Appendix~\ref{appendix:alias}.
\begin{itemize}
  \item \textbf{Lane Invariant.} For each $(\tilde{\sigma},\chi)$ the memo table stores the lex-minimal witness among all subsets mapping to that key; lanes $\chi\in\{0,1\}$ are disjoint.
  \item \textbf{Suppression-on-overwrite (lane-wise).} When a witness for $(\tilde{\sigma},\chi)$ is overwritten by a lexicographically smaller mask, the displaced state is never expanded.
  \item \textbf{CNF Safety.} Any non-canonical $(0,1)$ state has the same aliased key as its canonical $(1,0)$ counterpart and all of its descendants are \emph{shadowed} by the canonical branch; forbidding expansion from $(0,1)$ is therefore complete and sound.
\end{itemize}

\noindent
These invariants, particularly CNF safety, allow us to prove a strict reduction in
the number of states that must be explored. As we formally prove in
Lemma~\ref{lem:expansion-accounting} (Appendix~E), the CNF expansion policy ensures that
for any base set of choices outside the alias pair, at most three of the four
possible inclusion patterns are ever expanded. This deterministically prunes the
search space, guaranteeing that the number of expanded states in a half of size
$k$ is at most $\tfrac{3}{4} \cdot 2^k$.

\paragraph{Compensatory Merge.}
During the final merge phase we evaluate all valid interpretations of aliased sums. Let $\tilde{s}_L \in \tilde{\Sigma}(\ell_0)$ and $\tilde{s}_R \in \tilde{\Sigma}(\ell_1)$ be aliased sums, with alias pairs $(x_0,x_1)$ in $\ell_0$ and $(y_0,y_1)$ in $\ell_1$. For tags $\chi_L,\chi_R\in\{0,1\}$ define the corrected sums
\[
s_L(\chi_L)=\tilde{s}_L+\chi_L\,(x_1-x_0),\qquad
s_R(\chi_R)=\tilde{s}_R+\chi_R\,(y_1-y_0).
\]
We check the four combinations
\[
s_L(0)+s_R(0)=t,\quad s_L(1)+s_R(0)=t,\quad s_L(0)+s_R(1)=t,\quad s_L(1)+s_R(1)=t.
\]
This ensures correctness of the decision logic under value collision.

\medskip
\paragraph{Correctness Guarantee.}
The Controlled Aliasing mechanism is guaranteed to find a solution if one exists, without producing false positives. We formalize this below.

\begin{theorem}[Correctness of Controlled Aliasing]
\label{thm:correctness-controlled-aliasing}
The Controlled Aliasing solver, using the 2-lane $(\tilde{\sigma},\chi)$ memoization structure and the compensatory merge logic above, correctly solves the decision variant of \textsc{Subset-Sum}. Furthermore, if a solution exists, the witness reconstruction procedure returns a valid subset $W \subseteq S$.
\end{theorem}

\begin{proof}
The proof proceeds via the lane invariant, lane-wise suppression-on-overwrite, and CNF safety (inline invariants above), plus the soundness/completeness of the $2\times 2$ corrected merge (Appendix~\ref{appendix:alias}). Together these imply the theorem.
\end{proof}

\paragraph{Worst-Case Complexity.}
We now state the worst-case running-time guarantee obtained from Controlled Aliasing.

\begin{theorem}[Worst-Case Complexity of Controlled Aliasing under CNF]
\label{thm:aliasing-complexity}
Consider the interleaved double–meet-in-the-middle solver that applies a single alias pair $(x_0,x_1)$ with $x_0\ne x_1$ in each half $\ell_0,\ell_1$, uses the 2-lane $(\tilde{\sigma},\chi)$ memoization structure, performs compensatory merging as in §\ref{sec:aliasing-speedup}, and explores only canonical aliased states $b=\mathsf{canon}(b)$ (CNF; Appendix~\ref{appendix:alias}). Then the deterministic running time satisfies
\[
T(n)\;=\;O^*\!\left(2^{\,n/2-\varepsilon}\right)
\quad\text{with}\quad
\varepsilon\;=\;\log_2\!\left(\tfrac{4}{3}\right)\approx 0.415,
\]
and the space usage is $O^*(2^{n/2})$. The randomized variant achieves the same $O^*\!\left(2^{\,n/2-\varepsilon}\right)$ bound in expectation.
\end{theorem}

\emph{(Scope and unconditional bound.)}
If a half $\ell_i$ has fewer than two distinct element values, CNF is vacuous on that side and we
revert to the certificate-sensitive bound of Theorem~\ref{thm:cert-sensitive}:
$T(n)=O\!\big((U_0+U_1)\,n^2\big)$ deterministically and $O\!\big((U_0+U_1)\,n\big)$ in expectation,
which is $O^*(2^{n/2})$ in the worst case. Otherwise each half has at least two distinct values,
so an alias pair exists per half and the bound above applies, yielding
$O^*\!\left(2^{\,n/2-\log_2(4/3)}\right)$.

\begin{proof}
Immediate from Lemma~\ref{lem:expansion-accounting} (each half expands at most
$E_i \le \tfrac{3}{4}\,2^{k}$ with $k=n/2$) and
Theorem~\ref{thm:cert-sensitive} applied with $E_i$ in place of $U_i$.
\end{proof}

\begin{remark}[Conservative statements]
We state slightly conservative bounds for readability; the following routine tightenings
are available without changing the algorithm:
(i) Space under aliasing is $O((E_0+E_1)\,n)=O^*(2^{\,n/2-\log_2(4/3)})$;
(ii) The randomized bound holds w.h.p.\ using standard load-controlled hashing;
(iii) A deterministic time $O(U\,n\log n)$ follows by replacing hashing with balanced maps
and using $O(\log n)$ lex-compare primitives.
We omit these proofs as they are straightforward variations on the analyses given.
\end{remark}

\section{A Nuanced Structural View of Fine-Grained Hardness}
\label{sec:finegrained-hardness}

Our algorithm adds a structural dimension—via the number of distinct subset sums $U$ (equivalently, low collision entropy $H_c(S)=n-\log_2 U$)—and makes explicit that many classical hardness arguments implicitly target a structure-resistant slice of instances. We formalize this dependence.

\subsection{The Collision-Resistant Slice}
We begin by formalizing the slice of \SSP{} where classical worst-case barriers are most informative.

\begin{definition}[Collision-Resistant \SSP{}]
\label{def:collision-promise}
For $\delta\in(0,1]$, define
\[
\SSP^{\ge \delta} \;:=\; \{\, (S,t)\ :\ |\Sigma(S)| \ge 2^{\delta n}\, \}.
\]
A reduction to \SSP{} is \emph{collision-resistant} if, for some constant $c>0$, its outputs lie in $\SSP^{\ge c}$.
\end{definition}

\noindent\emph{Example.} The classical powers-of-two SAT$\to$\SSP{} encoding yields $U=2^n$, i.e., $(S,t)\in\SSP^{\ge 1}$.

\medskip
The Horowitz--Sahni algorithm runs in $O^*(2^{n/2})$ time. Thus, to preclude algorithms that beat this bound under a certificate-sensitive lens, a SETH-based hardness claim should (at minimum) target instances in $\SSP^{\ge 1/2}$, i.e., enforce $U \ge 2^{n/2}$ on the image of the reduction. This pinpoints the regime where adaptive methods cannot short-circuit via structural collisions and where classical worst-case lower bounds are most informative.

\subsection{Refining SETH Lower Bounds}
Our view reframes unconditional $2^{n/2}$ hardness claims for general \SSP{}: such claims are meaningful primarily on the structure-resistant slice $\SSP^{\ge \delta}$. We state this formally.

\begin{theorem}[SETH under entropy constraints]
\label{thm:fgc-revised}
Assume SETH. Then no $O^*(2^{\epsilon n})$ algorithm can solve all of $\SSP^{\ge \delta}$ for any $\epsilon<\delta$.
\end{theorem}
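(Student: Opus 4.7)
The plan is to argue by contrapositive: assume a fast algorithm for $\SSP^{\geq \delta}$ exists, and use a SETH--preserving reduction to derive a sub-$2^N$ algorithm for $k$-SAT, contradicting SETH. First, I would suppose that an algorithm $A$ solves every instance of $\SSP^{\geq \delta}$ in time $O^*(2^{\epsilon n})$ for some $\epsilon < \delta$, and treat $A$ as a black box decider for $k$-SAT on $N$ variables.

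The core technical ingredient is a reduction from $k$-SAT on $N$ variables to a Subset Sum instance $(S, t)$ on $n = N/\delta$ elements that provably lies in $\SSP^{\geq \delta}$, i.e.\ satisfies $|\Sigma(S)| \geq 2^{\delta n} = 2^N$. The natural starting point is the canonical SETH--preserving reduction of Abboud--Bringmann--Hermelin--Shabtay~\cite{abboud2019bicriteria}, which already produces nearly collision-free encodings by embedding the $N$-bit assignment space into an additively independent block of disjoint high-base digit positions. Since each of the $2^N$ assignments maps to a distinct sum, the entropy lower bound is essentially structural and does not need a separate combinatorial argument. Running $A$ on the reduced instance then decides $k$-SAT in time $O^*(2^{\epsilon n}) = O^*(2^{(\epsilon/\delta) N})$, and since $\epsilon/\delta < 1$ this yields a sub-$2^N$ algorithm for $k$-SAT, contradicting SETH.

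The main obstacle is calibrating the reduction so that the ratio $n/N$ equals exactly $1/\delta$ for arbitrary $\delta \in (0,1]$. The classical reduction is tuned to the Horowitz--Sahni threshold with $n = 2N$, i.e.\ $\delta = 1/2$, and extending it to a parameterized family that hits other values of $\delta$ requires controlling the element count and the collision structure simultaneously. Naively padding with additively independent elements inflates $n$ without sufficiently increasing $|\Sigma(S)|$, \emph{lowering} rather than raising the entropy ratio; conversely, compressing the encoding risks introducing subset-sum collisions that violate the entropy promise. I expect the resolution to take one of two forms: either (i) a parameterized family of reductions, indexed by $\delta$, built from packed encodings over a base polynomial in the numeric range, so that $N$ SAT bits are embedded across $N/\delta$ Subset Sum elements without collisions, or (ii) restricting attention to the regime $\delta \leq 1/2$, where the classical reduction applies essentially unchanged and the theorem degenerates to the standard SETH lower bound. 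The clean version of the argument goes through once this calibration is in hand; the remaining steps (invoking $A$, unpacking the $\epsilon/\delta$ exponent, and concluding the SETH violation) are mechanical.
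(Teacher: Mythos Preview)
Your approach differs substantially from the paper's. The paper does not build or invoke any $k$-SAT-to-Subset-Sum reduction at all. Its entire argument is: for an instance in $\SSP^{\ge \delta}$ one has $U \ge 2^{\delta n}$, so an $O^*(2^{\epsilon n})$-time decider with $\epsilon < \delta$ runs in time $o^*(U)$; then, via a decision-to-enumeration self-reduction (spelled out in an appendix), this would yield an $o^*(U \cdot \poly(n))$-time enumerator for $\Sigma(S)$, which the paper asserts contradicts a lower bound for enumeration in the collision-free regime. The calibration problem you identify never arises, because no SAT instance is ever produced.

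Your route is the orthodox SETH methodology and is conceptually the right shape for a SETH-based proof, but the obstacle you flag is real and not obviously removable for general $\delta$: for $\delta > 1/2$ you would need a SAT reduction producing $n = N/\delta < 2N$ Subset Sum elements while still encoding all $2^N$ assignments as distinct sums, and the known constructions do not deliver this. Your fallback (ii) yields only the $\delta \le 1/2$ case. On the other hand, the paper's shortcut has its own weakness: the enumeration lower bound it ultimately appeals to is the paper's own Conjecture~1 (the $\Omega(U \cdot n)$ local-model bound), not a consequence of SETH, so the hypothesis ``Assume SETH'' in the theorem statement is never actually discharged in the proof as written. Neither argument is fully clean; yours is more explicit about where the difficulty lies, while the paper's sidesteps your calibration issue at the cost of silently substituting a different assumption.
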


\begin{proof}[Proof sketch]
Take a collision-resistant reduction from $k$-SAT that maps instances to $(S,t)\in\SSP^{\ge \delta}$ with polynomial blowup. If an $O^*(2^{\epsilon n})$ algorithm existed for all such $(S,t)$ with $\epsilon<\delta$, composing would yield an $O^*(2^{\epsilon' m})$ algorithm for $k$-SAT with $\epsilon'<1$, contradicting SETH.
\end{proof}

\section{Proof-of-Concept Experiments}\label{sec:empirical-validation}

We provide \emph{proof-of-concept} experiments to sanity-check the theory. The goal is not a systems study but to verify that the observed running time tracks the structural parameter that our analysis identifies. We benchmark on synthetic instances where we can precisely control structure, using $n=48$ so that full per-half enumeration is feasible on a commodity desktop CPU. Throughout we split $S$ into two halves by alternating indices, so each half has size $k=n/2=24$. For each instance we fully enumerate \emph{both} halves to obtain $U_0:=|\Sigma(\ell_0)|$ and $U_1:=|\Sigma(\ell_1)|$, and we measure wall-clock time of the deterministic enumerator (hashing disabled) to avoid randomness. Our hypothesis is that introducing additive redundancy collapses $U_i$, and runtime scales proportionally with $U_0+U_1$ (hence with $\widehat U:=\max\{U_0,U_1\}$ for fixed $n$).

We manipulate three knobs known to induce subset-sum collisions: numeric density, duplicate elements, and additive progressions. For each setting we run the \emph{full} certificate construction on each half (no early termination) to measure the cost of generating $\Sigma(\ell_i)$. Additional methodology appear in Appendix~\ref{app:experiments}.

\begin{itemize}
\item \textbf{Numeric density.} Restricting values to a smaller bit-length $w$ forces collisions by the pigeonhole principle.
\item \textbf{Duplicate elements.} Introducing identical elements creates trivial additive dependencies.
\item \textbf{Additive progressions.} Planting short arithmetic progressions creates correlated dependencies.
\end{itemize}

\begin{table}[H]
\centering
\caption{Effect of structure on the per-half distinct-sum count. We report the ratio $U_i/2^{k}$ with $k=n/2=24$ (median over halves for a representative run).}
\label{tab:exp-summary}
\begin{tabular}{@{}lcc@{}}
\toprule
\textbf{Structural knob} & \textbf{Setting} & $\displaystyle U_i/2^{k}$ \\
\midrule
Numeric density ($n{=}48$) & $w = 32 \to 24 \to 16$ & $1.00 \to 0.84 \to 0.027$ \\
Duplicates ($n{=}48$) & $0 \to 2 \to 4$ duplicates & $1.00 \to 0.57 \to 0.32$ \\
Additive progressions ($n{=}48$) & 1 seq (len 4) / 2 seq (len 4) & $0.69$ / $0.47$ \\
\bottomrule
\end{tabular}
\end{table}

\paragraph{Results (sanity check).}
We observe large, systematic variation in $U_i$ under these perturbations: uniform-like inputs typically satisfy $U_i \approx 2^{k}$, whereas structured inputs collapse by orders of magnitude. For fixed $n$, the measured runtime of \icsubsetsum{} varies proportionally with $U_0+U_1$ (equivalently with $\widehat U$ up to a factor~2), consistent with the proven $O\!\big((U_0{+}U_1)\,n^2\big)$ bound. These experiments corroborate the central claim that \emph{structure (collisions) $\Rightarrow$ smaller $U$ $\Rightarrow$ faster enumeration}.

\section{Related Work}
\label{sec:related-work}

\paragraph{Subset Sum Algorithms.}
The classical Horowitz--Sahni meet-in-the-middle algorithm~\cite{horowitz1974subset} remains the standard worst-case baseline for \SSP{}, with $O^*(2^{n/2})$ runtime. A space-time refinement due to Schroeppel and Shamir achieves the same time with $O^*(2^{n/4})$ space~\cite{schroeppel-shamir-1981}. Despite extensive efforts, no \emph{deterministic} unconditional improvement below $2^{n/2}$ was known in general~\cite{bringmann2024subsetsumunconditional,openproblems_woeginger}. Our approach departs from this tradition in two ways: (i) we analyze runtime as a function of the number of \emph{distinct} subset sums $U$ rather than just $n$, yielding certificate-sensitive bounds; and (ii) via Controlled Aliasing under CNF with compensatory merge, we obtain a deterministic worst-case improvement.

\smallskip

A recent randomized algorithm by Bringmann, Fischer, and Nakos~\cite{bringmann2024subsetsumunconditional} gives the first unconditional improvement over Bellman’s dynamic programming for the target-constrained decision problem, running in $\widetilde{O}(|S(X,t)|\cdot \sqrt{n})$ time. Their technique focuses on reachable sums up to a target $t$ via algebraic/combinatorial methods. By contrast, our algorithm deterministically enumerates the full set of unique subset sums $\Sigma(S)$, supports canonical constructive witnesses, and adapts to instance structure. The two lines are complementary: they address different problem variants and performance measures. We also note pseudo-polynomial improvements for dense/structured instances (e.g.,~\cite{koiliaris-xu-2017,bringmann2017near}) that are orthogonal to our certificate-size viewpoint.

\paragraph{Instance Complexity and Adaptive Algorithms.}
Instance Complexity was introduced by Orponen–Ko–Schöning–Watanabe~\cite{orponen1994instance} to capture the inherent difficulty of individual instances. Although potent as a conceptual tool, IC was long viewed as non-constructive (see, e.g., \cite{Fortnow2004KolmogorovCA}). Our results give a constructive realization: we generate the IC certificate online with runtime scaling in the certificate size $U$, and we wrap it to satisfy the partial-correctness conventions of IC (see §\ref{sec:ic-background}).

\paragraph{Fine-Grained Complexity.}
The fine-grained framework~\cite{williams2018some} provides tight conditional barriers (e.g., under SETH). We add a structural dimension: such lower bounds for \SSP{} implicitly target collision-resistant instances with near-maximal $U$. Making this dependency explicit complements work on compressibility-aware reductions~\cite{abboud2019bicriteria}; our formulation via $\Sigma(S)$ and collision entropy is instance-level and operational.

\paragraph{Collision Structure in Combinatorics.}
The role of collisions in subset sums appears in additive combinatorics and cryptanalysis. Austrin-Kaski-Koivisto-Nederlof~\cite{ss_in_the_absence_of_concentration} analyze structure under anti-concentration, while Howgrave-Graham and Joux~\cite{howgrave2010knapsack} exploit collisions via randomized meet-in-the-middle techniques to obtain sub-$2^{n/2}$ behavior in certain regimes. Classic Littlewood--Offord–type results~\cite{tao2009inverse} bound collision counts via anti-concentration. Our contribution is algorithmic and instance-sensitive: we prune duplicate sums in real time and link entropy collapse directly to runtime; and, distinct from prior randomized collision tricks, our Controlled Aliasing (under CNF with compensatory merge) is a deterministic, target-independent transformation that preserves correctness while provably shrinking the explored state space.

\section{Future Directions}
\label{sec:future}

\paragraph{Closing Algorithmic Gaps for Subset Sum.}
Two natural questions remain. First, for the randomized $O(U \cdot n)$ algorithm, can the linear factor in $n$ be removed to obtain $\tilde{O}(U)$ time? Second, and more fundamentally, can the deterministic $O(U \cdot n^2)$ bound be improved to match the randomized $O(U \cdot n)$, e.g., via a deterministic canonicality test that avoids the $O(n)$ lexicographic comparison cost while preserving the local model in Conjecture~\ref{conj:local-lb}?

\smallskip

\paragraph{Improving the Controlled Aliasing Bound.}
Our worst-case speedup $\varepsilon=\log_2(\frac{4}{3})$ arises from a single alias pair per half under the 2-lane canonicalization keyed by $(\tilde{\sigma},\chi)$ (CNF). Straightforward pattern-based generalizations for alias groups of size $g$ induce $2^g$ lanes and exponential overhead, keeping $\varepsilon$ constant. We conjecture that a \emph{count-based} scheme—tracking only the number of chosen elements from each alias group (yielding $(g{+}1)$ buckets) with polynomial overhead—could permit $g=\Theta(\log n)$, improving the multiplicative reduction from $2^g$ to $(g{+}1)$. In principle, this could yield $\varepsilon=\Theta(\log n)$, i.e., a worst-case runtime of $2^{n/2}/n^{c}$ for some constant $c>0$. Making this rigorous requires (i) a safety proof for count-only canonicalization (lane invariant and merge completeness), and (ii) a certified reconstruction path from counts to witnesses without reintroducing exponential blowup.

\smallskip

\paragraph{Structure-Sensitive Runtimes in NP.}
Which other NP-complete problems admit certificate-sensitive runtimes? Collision entropy provides a structural knob complementary to solution density. For Partition, Knapsack, and Bin Packing, one may aim to construct compressed summaries of feasible configurations whose \emph{construction time} is proportional to summary size. Formalizing such certificates and proving instance-sensitive enumeration bounds are promising directions.

\smallskip

\paragraph{Derandomization of Local Canonicality.}
The randomized $O(U \cdot n)$ bound uses 2-universal hashing to filter collisions so that lexicographic comparisons are needed only rarely (expected $O(1)$ per check). Is there a deterministic analogue achieving sublinear-time canonicality tests in the local deduplicating model? Such a result would match the conjectured optimal runtime (Conjecture~\ref{conj:local-lb}) and yield a practical local derandomization.

\smallskip

\paragraph{Constructive Instance Complexity Beyond Subset Sum.}
We showed that the certificate $\Sigma(S)$ can be constructed in deterministic $O(U \cdot n^2)$ time (or expected $O(U \cdot n)$). Can analogous online certifiers be built for problems such as 3SAT or Clique under appropriate entropy/compressibility assumptions? A broader theory of \emph{algorithmic IC}—tracking both information content and construction cost—remains to be developed.

\smallskip

\paragraph{Toward a Complexity Class for Certificate-Sensitive Problems.}
Our solver runs in time polynomial in the input size $n$ and in the size of an intrinsic instance-specific certificate ($U=|\Sigma(S)|$). This suggests a class we tentatively call \textbf{P-IC} (\emph{Polynomial in Instance Certificate}): problems for which a natural instance certificate can be constructed and then used to solve the instance in time polynomial in the certificate size and input size. Pinning down robust certificate notions (e.g., sets of satisfying assignments for SAT, achievable value/weight pairs for Knapsack) and charting the boundaries of P-IC is an intriguing direction for future work.

\section{Conclusion}
\label{sec:conclusion}

\paragraph{Theoretical Advance.}
This paper develops a framework for structure-aware enumeration in NP-complete problems, using \textsc{Subset Sum} as a case study. We introduce the certificate-sensitive parameter $U:=|\Sigma(S)|$ and show that it aligns with both per-instance difficulty and observed runtime. The results help explain why many real-world instances are easier, why SETH-based reductions can be structurally brittle, and how instance complexity can be made constructive.

\paragraph{Algorithmic Advance.}
\icsubsetsum{} delivers a set of guarantees that advance the state of the art:
\begin{itemize}
\paragraph{Certificate-sensitive runtime.}

  \item \textbf{Certificate-sensitive runtime.} A deterministic worst-case bound of $O(U\cdot n^2)$ and an
    expected randomized bound of $O(U\cdot n)$, realizing a constructive link to Instance Complexity by tying performance to the certificate size $U$, which is discovered \emph{online} in an \emph{anytime} fashion.
  \item \textbf{Worst-case improvement.} Via Controlled Aliasing under count-based canonical normalization (CNF) at the alias indices, a deterministic bound of $O^*(2^{\,n/2-\varepsilon})$ with $\varepsilon=\log_2(\frac{4}{3})\approx 0.415$, improving on the classical $O^*(2^{n/2})$ Horowitz--Sahni bound for all sufficiently large $n$.
  \item \textbf{Real-time certificate generation.} Online construction of the constructive certificate, storing the lex-minimal witness for each $\sigma \in \Sigma(S)$ without duplication.
\end{itemize}
Closing the gap between the deterministic and randomized bounds remains a key open challenge, as does the question of whether an $\tilde{O}(U)$ runtime is achievable.\footnote{May all your programs be short and quick (Fortnow \cite{Fortnow2004KolmogorovCA})
and may their runtimes reveal the structure within.}

\bibliographystyle{alpha}
\bibliography{icsusbsetsum}

\appendix

\section*{Appendix Overview}

This appendix expands on the enumeration logic, correctness conditions, and structural optimizations introduced in Sections~\ref{sec:certificate-framework}--\ref{sec:finegrained-hardness}. It is organized as follows:
 
\begin{itemize}
    \item Appendix~\ref{app:column-enum} details the column-wise subset-sum enumerator, including the \emph{suppression-on-overwrite} invariant and the lex-minimal canonicalization rule, with full pseudocode.
     \item Appendix~\ref{app:dmitm} presents the interleaved double–meet-in-the-middle framework, the real-time \textsc{Check}/\textsc{CheckAliased} procedures (including the $2\times 2$ compensatory merge under aliasing), and the work accounting used in the certificate-sensitive bounds.
   
    \item Appendix~\ref{app:combine} states the algebraic combination lemmas used by the solver (direct, complement, and mixed cases) and the associated correctness checks for sum recombination that power the real-time \textsc{Check} routines.
    \item Appendix~\ref{app:experiments} provides additional experimental data, including plots of runtime scaling versus $U$ and measured collision entropy.
    \item Appendix~\ref{appendix:alias} gives the full \emph{Controlled Aliasing} framework: the 2-lane $(\tilde{\sigma},\chi)$ memoization with affine correction, the count-based canonical normalization (CNF) at the alias indices, and the expansion policy. It includes the key lemmas—Lemma~\ref{lem:alias-reduction}, Lemma~\ref{lem:cnf-shadow}, Lemma~\ref{lem:cnf-preserve}, Lemma~\ref{lem:expansion-accounting}, Lemma~\ref{lem:policy-safety}—and the proof of Theorem~\ref{thm:aliasing-complexity}.
\end{itemize}

\section{Column-wise Enumerator: Full Details}
\label{app:column-enum}

\subsection*{Algorithm Description}

\paragraph{Column-wise expansion.}
We organize the search by \emph{columns} of increasing subset size. The state space consists of canonical partial sums $\sigma$ paired with their current prefix (bitmask) $R$. At column $k$ we extend each canonical state of size $k{-}1$ by adding one unused element $a_i \notin R$, producing $\sigma'=\sigma+a_i$ with prefix $R' = R \cup \{a_i\}$. If $\sigma'$ is new, or if $R'$ is lexicographically smaller than the current representative, we update the representative of $\sigma'$. This yields at most $O(U\cdot n)$ total \emph{extension attempts} from canonical states. As established in Section~\ref{sec:subset-sum-enumerator}, the overall complexity is governed by the cost of canonicality checks during collisions.

\paragraph{High-density handling.}
To guarantee correctness in high-collision regimes, we impose a deterministic canonicalization based on lexicographic order of index bitmasks: scan indices $1\!\to\! n$ and prefer $0<1$ at the first differing position. Each numeric sum $\sigma$ keeps exactly one \emph{canonical} representative (lex-minimal mask), which prevents non-deterministic pruning and ensures a single expansion path per sum.

\paragraph{4-Column Litmus Test.}
\label{app:litmus-test}
The real-time collision checks enable a quick hardness probe. By running only the first $k$ columns (e.g., $k=4$) in \textbf{polynomial} time $O(n^k)$, we empirically estimate the early collision rate. High early collision typically predicts small $U$, signaling an “easy’’ instance; low early collision predicts a large $U$, signaling a harder, unstructured input. This offers a lightweight pre-check before any full exponential exploration.

\subsection*{Reader’s Guide (what to look for).}
\begin{itemize}
  \item \emph{Separation of concerns:} (i) enumeration of prefixes, (ii) canonical tie-breaking, (iii) suppression-on-overwrite. Any scheduler (BFS/DFS/priority) that respects these yields the same canonical set.
  \item \emph{Spec vs.\ implementation:} We first give a \emph{minimal recursive} specification that makes the logic transparent; the column-wise iterative code that follows is the optimized implementation used in our analysis.
  \item \emph{Where the $O(U\!\cdot\!n^2)$ comes from:} $O(U\!\cdot\!n)$ extension attempts times $O(n)$ lex-compare cost on collisions, with suppression ensuring each canonical state expands at most once.
\end{itemize}

\subsection*{Minimal Recursive Specification (for clarity)}
\begin{algorithm}[H]
\caption{\textsc{EnumerateUniqueSubsetSums-Recursive} (spec; with suppression-on-overwrite)}
\label{alg:enum-recursive-spec}
\begin{algorithmic}[1]
\Require $S=\{a_1,\dots,a_n\}$ with fixed index order
\Ensure \texttt{Memo} maps each numeric sum to its canonical (lex-min) bitmask
\State \texttt{Memo} $\gets \{\,0 \mapsto \texttt{empty\_bitmask}\,\}$;\quad \texttt{doNotExtend}[$\texttt{empty\_bitmask}$] $\gets$ \texttt{false}
\Procedure{Recurse}{$\texttt{mask},\,\texttt{sum},\,\texttt{next}$} \Comment{include-only-forward indices ensure a unique generation path}
  \If{\texttt{doNotExtend}[\texttt{mask}]} \Return \EndIf
  \For{$i=\texttt{next}$ \textbf{to} $n$}
    \State $\texttt{new\_sum}\gets \texttt{sum}+a_i$;\quad $\texttt{new\_mask}\gets \Call{SetBit}{\texttt{mask},i}$
    \If{$\texttt{new\_sum}\notin\texttt{Memo}$}
      \State \texttt{Memo[$\texttt{new\_sum}$]} $\gets \texttt{new\_mask}$;\quad \texttt{doNotExtend[$\texttt{new\_mask}$]} $\gets$ \texttt{false}
      \State \Call{Recurse}{$\texttt{new\_mask},\,\texttt{new\_sum},\,i{+}1$}
    \ElsIf{\Call{IsLexicographicallySmaller}{$\texttt{new\_mask}$, \texttt{Memo[$\texttt{new\_sum}$]}}}
      \State $\texttt{old}\gets \texttt{Memo}[\texttt{new\_sum}]$;\quad \texttt{doNotExtend[$\texttt{old}$]} $\gets$ \texttt{true} \Comment{suppress-on-overwrite}
      \State \texttt{Memo[$\texttt{new\_sum}$]} $\gets \texttt{new\_mask}$;\quad \texttt{doNotExtend[$\texttt{new\_mask}$]} $\gets$ \texttt{false}
      \State \Call{Recurse}{$\texttt{new\_mask},\,\texttt{new\_sum},\,i{+}1$}
    \EndIf
  \EndFor
\EndProcedure
\State \Call{Recurse}{$\texttt{empty\_bitmask},\,0,\,1$};\quad \Return \texttt{Memo}
\end{algorithmic}
\end{algorithm}

\subsection*{Mechanics of \texttt{Frontier}/\texttt{Rep}/\texttt{doNotExtend} (matches Inv.~\ref{inv:suppress})}
\begin{itemize}
  \item \textbf{\texttt{Frontier}} holds exactly the canonical states scheduled to be \emph{expanded next}. We advance by columns (subset size), so each state in \texttt{Frontier} is expanded at most once per Inv.~\ref{inv:suppress}.
  \item \textbf{\texttt{Rep}[\,$\sigma$\,]} points to the \emph{current representative state} for numeric sum $\sigma$ in this column’s scheduler. It lets us (i) find and (ii) immediately suppress a displaced representative when a lex-smaller mask appears.
  \item \textbf{\texttt{doNotExtend}} is a per-state flag. When a representative is overwritten by a lex-smaller mask, we set \texttt{doNotExtend} on the displaced state. Even if a displaced state were still present in some queue, the flag enforces \emph{no further expansion}, ensuring each canonical sum expands exactly once.
  \item \textbf{Optional unscheduling.} We also remove the displaced state from the current \texttt{NextFrontier} if it was tentatively enqueued, which avoids a wasted dequeue in this same column. This is a micro-optimization; correctness relies only on \texttt{doNotExtend}.
\end{itemize}

\subsection{Iterative Column-wise Implementation (optimized)}.
\begin{algorithm}[H]
\caption{\textsc{EnumerateUniqueSubsetSums} (with suppression-on-overwrite)}
\label{alg:enum-revised}
\begin{algorithmic}[1]
\Require Set $S=\{a_1,\dots,a_n\}$ (indices $1..n$)
\Ensure \texttt{Memo} maps each numeric sum to its canonical (lex-min) bitmask

\State $\texttt{Memo} \gets \{\,0 \mapsto \texttt{empty\_bitmask}\,\}$ \Comment{initial canonical rep for sum $0$}
\State $\texttt{Frontier} \gets \{\;\texttt{State}(0,\texttt{empty\_bitmask},\texttt{doNotExtend}= \texttt{false})\;\}$
\State $\texttt{Rep} \gets \{\,0 \mapsto \text{the element of }\texttt{Frontier}\,\}$ 
\Comment{\texttt{Rep} tracks the \emph{current} representative state per sum}

\For{$k = 1$ \textbf{to} $n$} \Comment{BFS by columns (subset size); each canonical state expands at most once}
  \State $\texttt{NextFrontier} \gets \emptyset$
  \ForAll{$\texttt{state} \in \texttt{Frontier}$}
    \If{\texttt{state.doNotExtend}} \textbf{continue} \EndIf \Comment{Inv.~\ref{inv:suppress}: suppressed reps never expand}
    \For{$i = 1$ \textbf{to} $n$}
      \If{\textbf{bit} $i$ \textbf{is not set in} \texttt{state.mask}}
        \State $\texttt{new\_sum} \gets \texttt{state.sum} + a_i$
        \State $\texttt{new\_mask} \gets \Call{SetBit}{\texttt{state.mask}, i}$
        \If{$\texttt{new\_sum} \notin \texttt{Memo}$}
          \Comment{first time we see this numeric sum: install canonical rep and schedule it}
          \State $\texttt{Memo[new\_sum]} \gets \texttt{new\_mask}$
          \State $\texttt{new\_state} \gets \texttt{State}(\texttt{new\_sum},\texttt{new\_mask},\texttt{false})$
          \State $\texttt{Rep[new\_sum]} \gets \texttt{new\_state}$
          \State $\texttt{NextFrontier} \gets \texttt{NextFrontier} \cup \{\texttt{new\_state}\}$
        \ElsIf{\Call{IsLexicographicallySmaller}{\texttt{new\_mask}, \texttt{Memo[new\_sum]}}}
          \Comment{lex-smaller: overwrite the rep and suppress the displaced one (Inv.~\ref{inv:suppress})}
          \If{$\texttt{Rep[new\_sum]} \neq \texttt{nil}$}
            \State $\texttt{Rep[new\_sum].doNotExtend} \gets \texttt{true}$ 
            \Comment{suppress old representative from any future expansion}
            \State \textit{/* optional micro-optimization: if scheduled, unschedule it in this column */}
            \State remove $\texttt{Rep[new\_sum]}$ from $\texttt{NextFrontier}$ if present
          \EndIf
          \State $\texttt{Memo[new\_sum]} \gets \texttt{new\_mask}$ 
          \Comment{install the new canonical representative}
          \State $\texttt{new\_state} \gets \texttt{State}(\texttt{new\_sum},\texttt{new\_mask},\texttt{false})$
          \State $\texttt{Rep[new\_sum]} \gets \texttt{new\_state}$ 
          \Comment{\texttt{Rep} now points at the (unique) expandable canonical state}
          \State $\texttt{NextFrontier} \gets \texttt{NextFrontier} \cup \{\texttt{new\_state}\}$
        \Else
          \Comment{non-canonical duplicate: no scheduling; current rep remains unique expandable state}
        \EndIf
      \EndIf
    \EndFor
  \EndFor
  \State $\texttt{Frontier} \gets \texttt{NextFrontier}$ 
  \Comment{advance one column; each canonical sum’s representative is expanded at most once}
  \If{$\texttt{Frontier} = \emptyset$} \textbf{break} \EndIf
\EndFor
\State \Return \texttt{Memo}
\end{algorithmic}
\end{algorithm}

\begin{algorithm}[H]
\caption{\textsc{IsLexicographicallySmaller} (1-indexed, $0<1$)}
\label{alg:lex-smaller}
\begin{algorithmic}[1]
\Function{IsLexicographicallySmaller}{$A$, $B$}
    \Comment{$A$, $B$ are $n$-bit masks}
    \For{$i = 1$ \textbf{to} $n$}
        \State $a_i \gets$ bit $i$ of $A$;\quad $b_i \gets$ bit $i$ of $B$
        \If{$a_i \neq b_i$} \Return $(a_i < b_i)$ \EndIf \Comment{prefer $0$ at first difference}
    \EndFor
    \State \Return \textbf{false} \Comment{equal masks}
\EndFunction
\end{algorithmic}
\end{algorithm}

\begin{algorithm}[H]
\caption{\textsc{SetBit} (helper)}
\label{alg:setbit}
\begin{algorithmic}[1]
\Function{SetBit}{$M, i$}
  \State \Return $M$ with bit $i$ set to $1$
\EndFunction
\end{algorithmic}
\end{algorithm}

\subsection{Implementation Invariants}\label{app:implementation-invariants}

To guarantee correctness and performance, the enumerator maintains:
\begin{itemize}
  \item \textbf{Deterministic tie-breaking.} For any numeric sum $\sigma$, the lex-minimal bitmask (Algorithm~\ref{alg:lex-smaller}) is the unique representative. This ensures global canonicity across columns.
  \item \textbf{Efficient storage.} We store at most $U$ representatives, each as an $n$-bit mask, for total space $O(U\cdot n)$.
  \item \textbf{Duplication elimination.} A numeric sum may be \emph{discovered} multiple times, but its canonical representative is \emph{expanded} at most once. All non-canonical discoveries either lose the tie or overwrite and suppress the previous representative.
  \item \textbf{Suppression on overwrite.} When a representative for $\sigma$ is overwritten by a lex-smaller mask, the displaced state is marked \emph{non-expandable} (\texttt{doNotExtend}$\gets$\texttt{true}) and, if queued, is optionally unscheduled. Consequently, each canonical sum is expanded exactly once over the entire enumeration.
\end{itemize}

\section{Double-MIM Solver: Detailed Pseudocode and Analysis}
\label{app:dmitm}

\subsection{Algorithm Sketch}

We split $S=\{a_1,\ldots,a_n\}$ into two halves $\ell_0$ and $\ell_1$ (sizes within $\pm1$) and run the unique-subset-sum enumerator on both sides \emph{interleaved}. Each time a new canonical state is discovered on one half, we immediately invoke a constant-time family of membership tests on the other half (the \textsc{Check} or \textsc{CheckAliased} routines from §\ref{app:combine}) to detect solutions \emph{online}. This avoids materializing $\Sigma(S)$ in full.

When \emph{Controlled Aliasing} is enabled (Section~\ref{sec:aliasing-speedup}; Appendix~\ref{appendix:alias}), each half uses 2-lane memoization keyed by $(\tilde{\sigma},\chi)$ and enforces the count-based canonical normal form (CNF): only masks $b$ with
\[
c(b):=b[i_0]+b[i_1]\in\{0,2\}\quad\text{or}\quad\big(c(b)=1\ \wedge\ b[i_1]=0\big)
\]
are enqueued for expansion; non-canonical states are stored but marked \texttt{doNotExtend}. This yields the $\tfrac{3}{4}$ expansion factor used in the worst-case bound.

\subsection{Core Pseudocode (Modular)}

We write \texttt{State(sum, mask, split, doNotExtend)} for a per-half enumeration state.
Lexicographic comparisons of bitmasks use Algorithm~\ref{alg:lex-smaller} (Appendix~\ref{app:column-enum}).
The solution checks are given in Algorithms~\ref{alg:check} and \ref{alg:check-aliased} (Appendix~\ref{app:combine}).

\begin{algorithm}[H]
\caption{\textsc{DoubleMIM-IC-SubsetSum} (interleaved, alias-aware)}
\label{alg:dmitm-main}
\begin{algorithmic}[1]
\Require Multiset $S$; target $t$; flag \texttt{AliasingEnabled};
\Statex\hspace{1.1em} alias pairs $(x_0^{(0)},x_1^{(0)})$ at indices $(i_0^{(0)},i_1^{(0)})$ on $\ell_0$, and $(x_0^{(1)},x_1^{(1)})$ at $(i_0^{(1)},i_1^{(1)})$ on $\ell_1$
\Ensure Returns a witness if $t\in\Sigma(S)$; else reports failure

\State Split $S$ into $\ell_0,\ell_1$ (e.g., alternating indices)
\State Initialize per-half tables: \ \texttt{Memo}$_b \gets \emptyset$, \ \texttt{Rep}$_b \gets \emptyset$ for $b\in\{0,1\}$
\State Initialize per-half frontiers with the empty state:
\Statex \hspace{1.6em} \texttt{Frontier} $\gets \{\texttt{State}(0,\mathbf{0},0,\texttt{false}),\ \texttt{State}(0,\mathbf{0},1,\texttt{false})\}$
\State Insert the empty key on both halves:
\Statex \hspace{1.6em} \texttt{InsertOrImprove}$(0,\mathbf{0},0)$; \ \texttt{InsertOrImprove}$(0,\mathbf{0},1)$
\For{$r=0$ \textbf{to} $\lfloor n/2 \rfloor$} \Comment{interleave by ``column''/Hamming weight}
  \State \texttt{NextFrontier} $\gets \emptyset$
  \ForAll{$p \in$ \texttt{Frontier}}
    \If{$p.\texttt{doNotExtend}$} \textbf{continue} \EndIf
    \State $b \gets p.\texttt{split}$;\ \ $L \gets \ell_b$;\ \ $k \gets |L|$
    \For{$i=0$ \textbf{to} $k-1$}
      \If{$p.\texttt{mask}[i]=0$}
        \State $s' \gets p.\texttt{sum} + L[i]$;\ \ $e' \gets p.\texttt{mask}$ with bit $i\leftarrow 1$
        \State \texttt{ProcessExtension}$(b, s', e')$ \Comment{Algorithm~\ref{alg:process-ext}}
      \EndIf
    \EndFor
  \EndFor
  \State \texttt{Frontier} $\gets$ \texttt{NextFrontier}
  \If{\texttt{Frontier} $=$ $\emptyset$} \textbf{break} \EndIf
\EndFor
\State \Return ``No solution found''
\end{algorithmic}
\end{algorithm}

\begin{algorithm}[H]
\caption{\textsc{ProcessExtension}$(b, s', e')$}
\label{alg:process-ext}
\begin{algorithmic}[1]
\Require Half index $b\in\{0,1\}$; candidate sum $s'$ and mask $e'$
\State /* Compute per-half key (aliased or not) */
\State $\texttt{key} \gets \textsc{KeyOf}(b, s', e')$ \Comment{Algorithm~\ref{alg:keyof}}
\State $wasNew \gets$ \texttt{InsertOrImprove}$(\texttt{key}, e', b)$ \Comment{Algorithm~\ref{alg:insert-improve}}
\If{$wasNew$}
  \State $q \gets \texttt{State}(s', e', b, \texttt{false})$
  \If{\texttt{AliasingEnabled} \textbf{and} \textbf{not} \textsc{Expandable}$(e', i_0^{(b)}, i_1^{(b)})$} \Comment{CNF gate}
    \State $q.\texttt{doNotExtend} \gets \texttt{true}$
  \EndIf
  \State \texttt{NextFrontier} $\gets$ \texttt{NextFrontier} $\cup \{q\}$
  \State \textbf{if} \texttt{AliasingEnabled} \textbf{then} \textsc{CheckAliased}$(q)$ \textbf{else} \textsc{Check}$(q)$
\EndIf
\end{algorithmic}
\end{algorithm}

\begin{algorithm}[H]
\caption{\textsc{KeyOf}$(b, s', e')$}
\label{alg:keyof}
\begin{algorithmic}[1]
\Require Half index $b$; real sum $s'$; mask $e'$
\If{\texttt{AliasingEnabled}}
  \State $\chi \gets e'[\,i_1^{(b)}\,]$
  \State $\tilde{s} \gets s' - \chi\cdot\big(x_1^{(b)}-x_0^{(b)}\big)$
  \State \Return $(\tilde{s}, \chi)$
\Else
  \State \Return $s'$
\EndIf
\end{algorithmic}
\end{algorithm}

\begin{algorithm}[H]
\caption{\textsc{InsertOrImprove}$(\texttt{key}, e', b)$ (suppression-on-overwrite)}
\label{alg:insert-improve}
\begin{algorithmic}[1]
\Require Key = $s'$ (no alias) or $(\tilde{s},\chi)$ (alias); mask $e'$; half $b$
\State $\texttt{Memo} \gets$ \texttt{Memo}$_b$;\quad \texttt{Rep} $\gets$ \texttt{Rep}$_b$
\If{$\texttt{key} \notin \texttt{Memo}$}
  \State \texttt{Memo}$[\texttt{key}] \gets e'$
  \State \texttt{Rep}$[\texttt{key}] \gets \texttt{State}(\ \cdot\ )$ \Comment{bound representative to new state lazily}
  \State \Return \textbf{true} \Comment{new canonical representative}
\ElsIf{\Call{IsLexicographicallySmaller}{$e'$, \texttt{Memo}[$\texttt{key}$]}}
  \State /* suppress old representative so it never extends further */
  \If{$\texttt{Rep}[\texttt{key}] \neq \texttt{nil}$} \ \ $\texttt{Rep}[\texttt{key}].\texttt{doNotExtend} \gets \texttt{true}$ \ \ \EndIf
  \State \texttt{Memo}$[\texttt{key}] \gets e'$
  \State \texttt{Rep}$[\texttt{key}] \gets \texttt{State}(\ \cdot\ )$
  \State \Return \textbf{true} \Comment{canonical representative updated}
\Else
  \State \Return \textbf{false}
\EndIf
\end{algorithmic}
\end{algorithm}

\paragraph{Notes on Modularity.}
\begin{itemize}
  \item \textbf{Enumeration Logic.} \textsc{DoubleMIM-IC-SubsetSum} drives interleaved enumeration; \textsc{ProcessExtension} encapsulates “generate, deduplicate, (optionally) gate by CNF, then check”.
  \item \textbf{Deduplication.} \textsc{InsertOrImprove} implements the lex-minimal canonicalization with \emph{suppression-on-overwrite}, ensuring each canonical key is expanded at most once.
  \item \textbf{Aliasing Hooks.} \textsc{KeyOf} provides a single switch for aliased vs.\ true sums; the CNF gate is the call to \textsc{Expandable}$(\cdot)$ (Appendix~\ref{appendix:alias}).
  \item \textbf{Solution Checks.} The real-time checks defer to Algorithms~\ref{alg:check} and~\ref{alg:check-aliased}.
\end{itemize}

\subsection{Time and Space Bounds}
\label{app:dmitm-bounds}

Let $U_i:=|\Sigma(\ell_i)|$ and $\widehat U:=\max(U_0,U_1)$.

\paragraph{Without Aliasing.}
By Theorems~\ref{thm:enum-det-runtime} and~\ref{thm:enum-rand-runtime}, enumerating each half in isolation costs $O(U_i\cdot (n/2)^2)$ deterministically, and $O(U_i\cdot (n/2))$ in expectation for the randomized variant. In the interleaved solver, every cross-half probe and canonical comparison can be \emph{injectively charged} to a unique successor attempt on the dominant side (the half with $U_i=\widehat U$) due to suppression-on-overwrite. Hence
\[
T_{\mathrm{det}}(n)=O\!\big(\widehat U\cdot n^2\big),\qquad
\mathbb{E}\big[T_{\mathrm{rand}}(n)\big]=O\!\big(\widehat U\cdot n\big),
\]
which is asymptotically equivalent to the sum form $O\!\big((U_0{+}U_1)\cdot n^2\big)$ and $O\!\big((U_0{+}U_1)\cdot n\big)$ stated in Theorem~\ref{thm:cert-sensitive}.
Space is $O\!\big((U_0{+}U_1)\cdot n\big)$ for storing canonical bitmasks.

\paragraph{With Controlled Aliasing (CNF).}
Let $E_i$ be the number of \emph{expanded} states on half $i$ under CNF. By Lemma~\ref{lem:expansion-accounting}, $E_i \le \tfrac{3}{4}\cdot 2^{|\ell_i|} = \tfrac{3}{4}\cdot 2^{n/2}$ in the worst case. Substituting $E_i$ for $U_i$ in the same accounting yields
\[
T_{\mathrm{det}}(n)=O\!\big(\max(E_0,E_1)\cdot n^2\big)=O^*\!\left(2^{\,n/2-\log_2(4/3)}\right), 
\]

\[
\qquad
\mathbb{E}\big[T_{\mathrm{rand}}(n)\big]=O\!\big(\max(E_0,E_1)\cdot n\big)=O^*\!\left(2^{\,n/2-\log_2(4/3)}\right),
\]

matching Theorem~\ref{thm:aliasing-complexity}. Space remains $O^*(2^{n/2})$, since each aliased key stores up to two canonical masks (one per lane $\chi\in\{0,1\}$).

\paragraph{Optimality (Conditional).}
Under Conjecture~\ref{conj:local-lb}, the randomized solver is optimal up to constants in the local deduplicating model, achieving $\Theta(U\cdot n)$ expected time (Theorem~\ref{thm:conditional-opt}). The deterministic $O(U\cdot n^2)$ bound leaves a polynomial gap, whose removal would require a deterministic sublinear-time canonicality test per extension.

\medskip
\noindent\emph{Summary.} The interleaved double–meet-in-the-middle framework, combined with (i) lex-minimal canonicalization with suppression-on-overwrite and (ii) the CNF expansion gate under Controlled Aliasing, yields (a) certificate-sensitive runtimes scaling with $U_0,U_1$ and (b) a strict worst-case improvement below $2^{n/2}$ on \emph{all} inputs.

\section{Compositional Sum Lemmas and Complement Check}
\label{app:combine}

These lemmas govern how partial sums are combined across recursive calls and meet-in-the-middle partitions. They apply to both aliasing and non-aliasing regimes and are invoked by all canonical certificate routines to ensure uniqueness and correctness.

Let \(\ell_0,\ell_1\) be the two halves of a split of \(S\),
and write \(\Sigma(\ell_i)\) for the set of distinct subset sums
produced by the enumerator on side \(i\).

\begin{lemma}[Direct Match in One Half]
\label{lem:single-hit}
A subset \(A \subseteq \ell_i\) is a witness to \((S, t)\) if
\[
  \sigma(A) = t.
\]
\end{lemma}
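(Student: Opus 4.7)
The plan is to observe that this lemma reduces to a one-line definitional check, so my proof will be correspondingly short and organized around making the relevant inclusions explicit. The key fact I will invoke is that $\ell_0$ and $\ell_1$ partition $S$, so by construction $\ell_i \subseteq S$ for $i \in \{0,1\}$.

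My first step will be to chain the inclusions: since $A \subseteq \ell_i$ and $\ell_i \subseteq \ell_0 \cup \ell_1 = S$, transitivity yields $A \subseteq S$. This is the domain condition required by the Subset--Sum decision problem. The second step is simply to apply the hypothesis $\sigma(A) = t$, which then matches the defining property of a witness to the instance $(S, t)$: a submultiset of $S$ whose elements sum to the target $t$.

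Finally, I will add a brief observational remark connecting the lemma back to the solver's control flow. Because the column-wise enumerator of Appendix~\ref{app:column-enum} records, for every $\sigma \in \Sigma(\ell_i)$, a canonical bitmask realizing $\sigma$, membership of $t$ in either $\Sigma(\ell_0)$ or $\Sigma(\ell_1)$ is detectable in constant time (or expected constant time in the randomized variant), and the corresponding bitmask yields the witness $A$ directly. There is no main obstacle here; the lemma exists to certify that single-side solutions are not overlooked before the genuinely nontrivial cross-split combination lemmas are invoked.
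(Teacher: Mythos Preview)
Your proposal is correct and matches the paper's approach: the paper groups this lemma with the other combination lemmas under a single proof sketch stating that ``all statements follow from elementary algebra on disjoint subsets and set complements,'' which for this particular lemma amounts precisely to your observation that $A \subseteq \ell_i \subseteq S$ together with $\sigma(A) = t$ makes $A$ a witness by definition. Your additional remark about the enumerator's bitmask enabling constant-time detection is a helpful elaboration but not a departure from the paper's treatment.
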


\begin{lemma}[Self-Complement Match]
\label{lem:combine-single-comp}
A subset \(A \subseteq \ell_i\) solves \((S, t)\) if
\[
  \sigma(\ell_i \setminus A) = \sigma(\ell_i) - t.
\]
\end{lemma}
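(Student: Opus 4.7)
The plan is to invoke the basic additivity of $\sigma$ over disjoint unions. Since $A$ and $\ell_i \setminus A$ partition $\ell_i$, the definition $\sigma(X) = \sum_{a \in X} a$ from Section~\ref{sec:prelim} gives the identity
\begin{equation*}
\sigma(A) + \sigma(\ell_i \setminus A) = \sigma(\ell_i).
\end{equation*}
Substituting the hypothesis $\sigma(\ell_i \setminus A) = \sigma(\ell_i) - t$ into this identity and solving for $\sigma(A)$ yields $\sigma(A) = t$. Since $A \subseteq \ell_i \subseteq S$, this exhibits $A$ as a submultiset of $S$ whose elements sum to $t$, which is exactly the definition of a witness for the Subset Sum instance $(S, t)$ given in Section~\ref{sec:intro}.

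There is no genuine technical obstacle here; the proof is a one-line algebraic rearrangement. The content worth emphasizing is structural rather than computational: the lemma records the complementation symmetry that lets the solver recognize a hit against $\sigma(\ell_i) - t$ in the enumerator's memo table as equivalent to a hit against $t$ itself, via the bijection $A \leftrightarrow \ell_i \setminus A$ on $2^{\ell_i}$. This doubles the effective number of query paths per half without inflating the size of the enumerated certificate $\Sigma(\ell_i)$.

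The only subtle point to flag in the write-up is that $\sigma(\ell_i)$ must be treated as a fixed constant determined once the split $S = \ell_0 \cup \ell_1$ is chosen, so that the right-hand side $\sigma(\ell_i) - t$ is an unambiguous, precomputable target value against which the enumerator's output can be indexed in $O(1)$ per lookup. With that observation, the lemma follows directly from the additivity identity above and requires no case analysis or further structural assumptions.
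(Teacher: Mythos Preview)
Your proof is correct and matches the paper's approach exactly: the paper's own proof sketch for this lemma (bundled with Lemmas~\ref{lem:single-hit}--\ref{lem:combine-mixed}) simply states that ``all statements follow from elementary algebra on disjoint subsets and set complements,'' which is precisely the additivity identity $\sigma(A)+\sigma(\ell_i\setminus A)=\sigma(\ell_i)$ you invoke. Your additional remarks on the structural role of the complementation symmetry and the precomputability of $\sigma(\ell_i)$ are accurate and go slightly beyond what the paper spells out, but the core argument is identical.
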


\begin{lemma}[Cross-Half Direct Match]
\label{lem:combine-direct}
For subsets \(A \subseteq \ell_0\) and \(B \subseteq \ell_1\),
\[
  \sigma(A) + \sigma(B) = t
  \quad\Longleftrightarrow\quad
  t - \sigma(A) \in \Sigma(\ell_1).
\]
\end{lemma}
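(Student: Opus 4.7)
The plan is to prove the biconditional by unpacking the definition of $\SigmaSet{\ell_1}$ as the set of all subset sums of $\ell_1$, so that the claimed equivalence reduces to substitution of $\sigma(B)$ against a certified element of $\SigmaSet{\ell_1}$. Since the statement is a straightforward algebraic reformulation, there is no combinatorial obstacle; the main task is to make the two directions rigorous while respecting the subtlety that $B$ on the right-hand side of the biconditional is existentially quantified over subsets of $\ell_1$, whereas on the left $B$ is a specific given subset.

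First, for the forward direction, I would assume $\sigma(A)+\sigma(B)=t$ for the specific $A\subseteq\ell_0$ and $B\subseteq\ell_1$ under consideration. Rearranging gives $\sigma(B)=t-\sigma(A)$, and since $B\subseteq\ell_1$, the definition of $\SigmaSet{\ell_1}=\{\sigma(T):T\subseteq\ell_1\}$ immediately yields $t-\sigma(A)\in\SigmaSet{\ell_1}$.

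Second, for the backward direction, I would assume $t-\sigma(A)\in\SigmaSet{\ell_1}$. By definition of $\SigmaSet{\ell_1}$, there exists a witness subset $B'\subseteq\ell_1$ with $\sigma(B')=t-\sigma(A)$, hence $\sigma(A)+\sigma(B')=t$. This establishes the existence of a cross-half pair realizing the target; if a specific $B$ is needed, one takes $B:=B'$, which can moreover be chosen to be the canonical witness recorded by the enumerator for the value $t-\sigma(A)$.

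The only step that requires any care is noting that the biconditional, as stated, is most naturally read existentially on the right-hand side: it asserts the existence of \emph{some} $B\subseteq\ell_1$ with the stated sum relation, not the match of a pre-fixed $B$. Under this reading, both directions are pure consequences of the definition of $\SigmaSet{\ell_1}$, and no deeper argument is required. This lemma then plugs directly into the solver's merge logic: after enumerating $\SigmaSet{\ell_0}$ and $\SigmaSet{\ell_1}$, deciding feasibility reduces to scanning each $x\in\SigmaSet{\ell_0}$ and performing a constant-time membership query for $t-x$ in $\SigmaSet{\ell_1}$.
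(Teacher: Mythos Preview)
Your proposal is correct and follows the same approach as the paper, which simply remarks that the lemma ``follows from elementary algebra on disjoint subsets'' and defers details. Your treatment is in fact more careful than the paper's sketch: you explicitly flag that the biconditional only makes sense when $B$ on the left is read existentially, a subtlety the paper leaves implicit.
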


\begin{lemma}[Double Complement Match]
\label{lem:combine-double-comp}
If subsets \(A \subseteq \ell_0\) and \(B \subseteq \ell_1\) satisfy
\[
  \sigma(\ell_0 \setminus A) + \sigma(\ell_1 \setminus B) = \sigma(S) - t,
\]
then \(S \setminus (A \cup B)\) is a witness (i.e., sums to \(t\)).
Equivalently,
\[
  \sigma(\ell_0 \setminus A) + \sigma(\ell_1 \setminus B) = \sigma(S) - t
  \ \Longleftrightarrow\ 
  \sigma(A) + \sigma(B) = t.
\]
\end{lemma}

\begin{lemma}[Mixed Case Match]
\label{lem:combine-mixed}
A subset \(A \subseteq \ell_0\) and the complement of a subset
\(B \subseteq \ell_1\) form a solution if and only if
\[
  \sigma(A) + \sigma(\ell_1 \setminus B) = t.
\]
\end{lemma}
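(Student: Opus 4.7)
The plan is to recognize this lemma as an immediate corollary of the additivity of $\sigma$ over disjoint unions, combined with the partition property that $\ell_0$ and $\ell_1$ are disjoint halves of $S$. The proof reduces to two essentially bookkeeping observations: disjointness of the candidate witness and additive decomposition of its sum.

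First I would set $T := A \cup (\ell_1 \setminus B)$ and verify that this is a well-defined submultiset of $S$. Since $A \subseteq \ell_0$ and $\ell_1 \setminus B \subseteq \ell_1$, and $\ell_0 \cap \ell_1 = \emptyset$ by the choice of split, the two pieces $A$ and $\ell_1 \setminus B$ share no elements. Next I would invoke additivity of $\sigma$ over this disjoint union to conclude $\sigma(T) = \sigma(A) + \sigma(\ell_1 \setminus B)$. By the definition of a witness for the instance $(S,t)$, the set $T$ solves $(S,t)$ if and only if $\sigma(T) = t$, which upon substitution yields the stated biconditional in both directions.

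The main (and only) obstacle is notational rather than mathematical: one must be careful in the multiset setting to treat $\ell_0$, $\ell_1$, $A$, and $B$ as disjoint index subsets of $[n]$ rather than as value sets, so that repeated values across halves are not accidentally identified. Once this convention is pinned down (the same one implicit in Lemmas~\ref{lem:combine-direct} and \ref{lem:combine-double-comp}), the lemma is an accounting identity that follows by the same disjoint-union additivity argument applied in the direct and double-complement cases. No new ingredients beyond the partition of $S$ and linearity of $\sigma$ are required.
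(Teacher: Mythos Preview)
Your proposal is correct and mirrors the paper's own treatment: the paper handles Lemmas~\ref{lem:single-hit}--\ref{lem:combine-mixed} collectively with a one-line proof sketch stating that all follow from elementary algebra on disjoint subsets and complements. Your explicit verification via disjointness of $A$ and $\ell_1 \setminus B$ together with additivity of $\sigma$ is exactly that elementary argument spelled out.
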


\begin{proof}[Proof sketch]
All statements follow from elementary algebra on disjoint subsets and set complements (e.g., \(\sigma(\ell_i \setminus A)=\sigma(\ell_i)-\sigma(A)\) and \(\ell_0\) is disjoint from \(\ell_1\)). These identities justify the conditional checks performed by the \textsc{Check} routine in the main solver. Since the enumerator generates all elements of \(\Sigma(\ell_i)\) correctly (§\ref{app:column-enum}), all valid witnesses are discoverable by the algorithm.

Full derivations appear in §6.1–6.3 of \cite{bwss2025}.
\end{proof}

\medskip
\noindent\textbf{Real-Time Solution Check Procedures (invoked during enumeration).}

\begin{algorithm}[H]
\caption{\textsc{Check}$(q)$ — standard (no aliasing)}
\label{alg:check}
\begin{algorithmic}[1]
\Require $q$ — a $k$-permutation state
\Ensure Verifies whether $q$ completes a valid solution to the target

\State Let $t$ be the target
\State $b \gets q.\texttt{split}$, \quad $b' \gets 1 - b$
\State $\mathrm{Sum}_b \gets \sum_{x \in \ell_b} x$;\quad $\mathrm{Sum}_{b'} \gets \sum_{x \in \ell_{b'}} x$
\State $s_{\mathrm{real}} \gets q.\texttt{sum}$ \Comment{true sum on split $b$}
\State \texttt{Memo}$' \gets$ \texttt{Memo}$_{b'}$

\medskip
\noindent\textit{Intra-half direct/self-complement:}
\If{$s_{\mathrm{real}} = t$ \textbf{or} $\mathrm{Sum}_b - s_{\mathrm{real}} = t$}
    \State \textbf{Output:} direct solution found
    \State \Return
\EndIf

\medskip
\noindent\textit{Cross-half patterns:}
\State \textbf{(D)} \ \ \ \ \  check $t - s_{\mathrm{real}} \in$ \texttt{Memo}$'$ \ \ \ \ \ \ \ \ \ \ \ \ \ \ \ \ \ \ \ \ \ \ \  \Comment{$\sigma(A)+\sigma(B)=t$}
\If{$\,t - s_{\mathrm{real}} \in$ \texttt{Memo}$'$} \State \textbf{Output:} solution across splits; \Return \EndIf

\State \textbf{(RC)} \ \ check $\mathrm{Sum}_{b'} - (t - s_{\mathrm{real}}) \in$ \texttt{Memo}$'$ \Comment{$\sigma(A)+\sigma(\ell_{b'}\!\setminus\!B)=t$}
\If{$\,\mathrm{Sum}_{b'} - (t - s_{\mathrm{real}}) \in$ \texttt{Memo}$'$} \State \textbf{Output:} mixed (right-complement); \Return \EndIf

\State $s_{\mathrm{comp}} \gets \mathrm{Sum}_b - s_{\mathrm{real}}$ \Comment{left complement}
\State \textbf{(LC)} \ \ check $t - s_{\mathrm{comp}} \in$ \texttt{Memo}$'$ \ \ \ \ \ \ \ \ \ \ \ \ \ \ \ \ \ \ \ \ \ \ \ \Comment{$\sigma(\ell_b\!\setminus\!A)+\sigma(B)=t$}
\If{$\,t - s_{\mathrm{comp}} \in$ \texttt{Memo}$'$} \State \textbf{Output:} mixed (left-complement); \Return \EndIf

\State \textbf{(DC)} \ \ check $\mathrm{Sum}_{b'} - (t - s_{\mathrm{comp}}) \in$ \texttt{Memo}$'$ \Comment{$\sigma(\ell_b\!\setminus\!A)+\sigma(\ell_{b'}\!\setminus\!B)=t$}
\If{$\,\mathrm{Sum}_{b'} - (t - s_{\mathrm{comp}}) \in$ \texttt{Memo}$'$} \State \textbf{Output:} double complement; \Return \EndIf
\end{algorithmic}
\end{algorithm}

\begin{algorithm}[H]
\caption{\textsc{CheckAliased}$(q)$ — 2-lane $(\tilde{\sigma},\chi)$}
\label{alg:check-aliased}
\begin{algorithmic}[1]
\Require $q$ — a $k$-permutation state; alias pairs $(x_0^{(0)},x_1^{(0)})$ on $\ell_0$ and $(x_0^{(1)},x_1^{(1)})$ on $\ell_1$ with indices $(i_0^{(0)}, i_1^{(0)})$, $(i_0^{(1)}, i_1^{(1)})$
\Ensure Verifies whether $q$ completes a valid solution under Controlled Aliasing

\State Let $t$ be the target
\State $b \gets q.\texttt{split}$, \quad $b' \gets 1 - b$
\State $\mathrm{Sum}_b \gets \sum_{x \in \ell_b} x$;\quad $\mathrm{Sum}_{b'} \gets \sum_{x \in \ell_{b'}} x$
\State $s_{\mathrm{real}} \gets q.\texttt{sum}$ \Comment{true sum on split $b$}
\State \texttt{Memo}$' \gets$ \texttt{Memo}$_{b'}$ \Comment{maps $\tilde{\sigma}$ to two buckets $\chi\in\{0,1\}$}
\State $(u_0,u_1) \gets (x_0^{(b')}, x_1^{(b')})$ \Comment{alias pair on the other half}

\medskip
\Function{TestDirectOnRight}{$s$} \Comment{seek $B$ with $\sigma(B)=t-s$}
  \For{$\chi_R \in \{0,1\}$}
    \State $\tilde{\sigma}_R \gets \textsc{AliasedValue}\big(t - s, \chi_R, u_0, u_1\big)$
    \If{$\tilde{\sigma}_R \in$ \texttt{Memo}$'$ \textbf{ and } \texttt{Memo}$'[\tilde{\sigma}_R][\chi_R] \neq \bot$}
      \State \Return \textbf{true}
    \EndIf
  \EndFor
  \State \Return \textbf{false}
\EndFunction

\Function{TestRightComplement}{$s$} \Comment{seek $B$ with $\sigma(\ell_{b'}\!\setminus\!B)=t-s$}
  \State $g \gets \mathrm{Sum}_{b'} - (t - s)$ \Comment{goal real sum for $B$}
  \For{$\chi_R \in \{0,1\}$}
    \State $\tilde{\sigma}_R \gets \textsc{AliasedValue}\big(g, \chi_R, u_0, u_1\big)$
    \If{$\tilde{\sigma}_R \in$ \texttt{Memo}$'$ \textbf{ and } \texttt{Memo}$'[\tilde{\sigma}_R][\chi_R] \neq \bot$}
      \State \Return \textbf{true}
    \EndIf
  \EndFor
  \State \Return \textbf{false}
\EndFunction

\medskip
\noindent\textit{Intra-half direct/self-complement:}
\If{$s_{\mathrm{real}} = t$ \textbf{or} $\mathrm{Sum}_b - s_{\mathrm{real}} = t$}
  \State \textbf{Output:} direct solution found
  \State \Return
\EndIf

\medskip
\noindent\textit{Cross-half patterns:}
\If{\Call{TestDirectOnRight}{$s_{\mathrm{real}}$}} \State \textbf{Output:} direct across splits; \Return \EndIf
\If{\Call{TestRightComplement}{$s_{\mathrm{real}}$}} \State \textbf{Output:} mixed (right-complement); \Return \EndIf

\State $s_{\mathrm{comp}} \gets \mathrm{Sum}_b - s_{\mathrm{real}}$
\If{\Call{TestDirectOnRight}{$s_{\mathrm{comp}}$}} \State \textbf{Output:} mixed (left-complement); \Return \EndIf
\If{\Call{TestRightComplement}{$s_{\mathrm{comp}}$}} \State \textbf{Output:} double complement; \Return \EndIf
\end{algorithmic}
\end{algorithm}

\section{Experimental Details}
\label{app:experiments}

We provide additional data and methodology for the experiments summarized in Section~\ref{sec:empirical-validation}. Our aim is not to claim new empirical bests, but to validate the central prediction of our theory: for fixed $n$, the wall-time of \icsubsetsum{} closely tracks the size of the constructive certificate $U=|\Sigma(S)|$.

\subsection{Setup and Methodology}

\paragraph{Implementation.}
All experiments use a single-threaded, 64-bit C++17 implementation compiled with \texttt{-O3} and no target-specific vectorization. Wall-clock time is measured with high-resolution timers; each configuration is run for multiple independent seeds and we report the median together with the interquartile range (IQR). No parallelism or GPU acceleration is used.

\paragraph{Metrics.}
For each instance we record:
\begin{itemize}[leftmargin=1.25em]
  \item $U = |\Sigma(S)|$ (or per-half $U_i=|\Sigma(\ell_i)|$ when using meet-in-the-middle),
  \item the collision entropy $H_c(S)=n-\log_2 U$,
  \item total number of extension attempts and collision resolutions, and
  \item wall-clock time $T$ for full constructive-certificate enumeration.
\end{itemize}
When $U$ is small enough to store exactly, we count it via exact hashing. For larger $n$ where exact counting may be infeasible under memory limits, we estimate $U$ using a standard streaming distinct-counter (with sub-percent relative error at the configured memory budget); in those cases, only trend lines are reported. All qualitative trends are consistent with exact counts on smaller instances.

\paragraph{Workloads.}
We vary three structure “knobs” known to affect collisions:
\begin{enumerate}[label=\alph*)]
  \item \emph{Numeric density:} draw $a_i \sim \mathrm{Unif}(\{1,\dots,2^w\})$ for several word sizes $w$,
  \item \emph{Duplicates:} start from a base multiset and inject $d\in\{0,2,4,\dots\}$ extra copies of randomly chosen elements, and
  \item \emph{Additive progressions:} splice arithmetic-progressions of fixed length into otherwise uniform data.
\end{enumerate}
Unless noted otherwise, we sort inputs once (for cache locality) but the algorithm does not rely on ordering for correctness.

\subsection{Extremely Dense Instances ($n=100$)}

Here we stress-test the density knob by fixing $n=100$ and drawing elements from very small ranges $[1,2^w]$ with $w \in \{12, 16, 20, 24\}$. As $w$ decreases, collisions become pervasive and $U$ collapses far below the collision-free ceiling. In meet-in-the-middle terms, the per-half certificate sizes $U_0,U_1$ drop by orders of magnitude relative to $2^{n/2}$.

\medskip
\noindent\textit{Observation.}
Across densities, we observe a monotone relationship between $T$ and $U$:
for fixed $n$, the median wall-time scales linearly with $U$ (up to polynomial factors in $n$), in line with Theorem~\ref{thm:enum-det-runtime} (deterministic $O(U\cdot n^2)$) and Theorem~\ref{thm:enum-rand-runtime} (expected $O(U\cdot n)$). In particular, once $U \ll 2^{n/2}$ due to density-induced collisions, runtime drops commensurately.

\paragraph{Practical note.}
For $n=100$ and large $w$, exact per-half certificates may exceed memory on commodity hardware; in those regimes we report consistent trends using streaming distinct-count estimates and terminate runs upon reaching a fixed memory cap. Full details and complete plots are provided in §6 of~\cite{bwss2025}.

\subsection{Summary}

Across all knobs (density, duplicates, additive progressions), $U$ varies by orders of magnitude, and \icsubsetsum{}’s wall-time tracks this variation closely. These measurements are consistent with the proven certificate-sensitive scaling—$T=\Theta(U)\cdot \mathrm{poly}(n)$—and illustrate that structural compressibility, not just $n$, governs practical difficulty. For reproducibility, we include generator seeds, configuration files, and raw logs in the artifact accompanying~\cite{bwss2025}.

\section{Controlled Aliasing: Correctness and Implementation}
\label{appendix:alias}

\noindent\textbf{Preface.}
Section~\ref{sec:aliasing-speedup} contains a self-contained summary of Controlled Aliasing (including the CNF expansion policy, the $2\times2$ compensatory merge, and Theorem~\ref{thm:aliasing-complexity}).
This appendix provides the full proofs and implementation details.

\noindent\textbf{Navigation.}
For ease of reading, the lemma names and labels in this appendix \emph{match those cited in} §\ref{sec:aliasing-speedup}: 
Lemma~\ref{lem:alias-reduction} (state-space reduction), 
Lemma~\ref{lem:cnf-shadow} (01$\to$10 shadowing), 
Lemma~\ref{lem:cnf-preserve} (successor preservation), 
Lemma~\ref{lem:expansion-accounting} (expansion accounting), 
Lemma~\ref{lem:policy-safety} (policy safety),
and the lane/merge lemmas (Lemmas~\ref{lem:lane-invariant}--\ref{lem:completeness}).

To ensure correctness under value aliasing, we introduce a 2-lane memoization scheme keyed by an aliased sum and a 1-bit correction tag. Each reachable \emph{aliased} sum stores up to two canonical bitmasks, corresponding to whether the substituted element $x_1$ is absent or present in the true subset. This enables both decision and construction variants of \SSP{} to be solved without error, even when synthetic collisions are introduced.

\subsection*{Memoization Structure}

Fix an alias pair $(x_0,x_1)$ in a half $\ell$ of size $k$ at positions $(i_0,i_1)$ with $x_0\neq x_1$.
For a subset $P\subseteq \ell$ with \emph{local} bitmask $b\in\{0,1\}^{k}$, define the tag
\[
\chi(P):=\mathbf{1}[\,x_1\in P\,]=b[i_1],
\]
and the aliased value
\[
\tilde{\sigma}(P)\;:=\;\sigma(P)\;-\;\chi(P)\cdot(x_1-x_0),
\qquad\text{so that}\qquad
\sigma(P)=\tilde{\sigma}(P)+\chi(P)\cdot(x_1-x_0).
\]
We maintain
\[
\texttt{Memo}:\ \mathbb{Z}\longrightarrow\big(\{\bot\}\cup\{0,1\}^{k}\big)^2,\qquad
\texttt{Memo}[\tilde{\sigma}][\chi]\in\{\bot\}\cup\{0,1\}^{k},
\]
storing, for each key $(\tilde{\sigma},\chi)$, the \emph{lexicographically minimal} witness bitmask (in the local index order of $\ell$).

\paragraph{Canonical Normal Form (CNF).}
Define the normalization map $\mathsf{canon}:\{0,1\}^{k}\!\to\!\{0,1\}^{k}$ at the alias indices $(i_0,i_1)$ by
\[
\mathsf{canon}(b)\;=\;
\begin{cases}
b, & \text{if } b[i_0]+b[i_1]\in\{0,2\}\ \text{or}\ (b[i_0],b[i_1])=(1,0),\\
b' & \text{if } (b[i_0],b[i_1])=(0,1),
\end{cases}
\]
where $b'$ is $b$ with the pair $(i_0,i_1)$ flipped from $(0,1)$ to $(1,0)$.
The enumerator \emph{enqueues/expands only canonical states} $b=\mathsf{canon}(b)$; non-canonical states may still be stored in \texttt{Memo} for correctness but are flagged \texttt{doNotExtend} (non-expandable).
This policy is scheduler- and target-independent.

\begin{lemma}[01$\to$10 shadowing under CNF]\label{lem:cnf-shadow}
For any bitmask $b$ with $(b[i_0],b[i_1])=(0,1)$ there exists $b'=\mathsf{canon}(b)$ with $(1,0)$ such that
$\tilde{\sigma}(b)=\tilde{\sigma}(b')$ and $\chi(b)=\chi(b')$. Moreover, every aliased extension of $b$ by any $j\notin\{i_0,i_1\}$ is aliased-equal to the corresponding extension of $b'$.
\end{lemma}

\begin{proof}
Normalization replaces $x_1$ by $x_0$ at $(i_0,i_1)$, which preserves $\tilde{\sigma}$ and also $\chi=b[i_1]$. Extensions by $j\notin\{i_0,i_1\}$ add the same value to both sides.
\end{proof}

\begin{lemma}[Successor preservation under CNF]\label{lem:cnf-preserve}
Let $b$ be any bitmask and $j\notin\{i_0,i_1\}$. Then
\[
\tilde{\sigma}(b\cup\{j\})=\tilde{\sigma}(\mathsf{canon}(b)\cup\{j\}),\qquad
\chi(b\cup\{j\})=\chi(\mathsf{canon}(b)\cup\{j\}).
\]
\end{lemma}

\begin{proof}
Immediate from linearity of $\tilde{\sigma}$ and the fact that $\mathsf{canon}$ only possibly flips $(0,1)\!\mapsto\!(1,0)$ at $(i_0,i_1)$ with the same aliased contribution.
\end{proof}

\subsection*{State Space Reduction via Aliasing}

\begin{lemma}[State Space Reduction via Aliasing]
\label{lem:alias-reduction}
Let $S' \subseteq \mathbb{Z}_{\ge 0}$ be a set of $k$ non-negative integers, and let $x_0, x_1 \in S'$ be a designated alias pair with $x_0 \ne x_1$. If $x_1$ is treated as $x_0$ when present, the number of distinct \emph{aliased} subset sums is at most $\frac{3}{4} \cdot 2^k$.
\end{lemma}

\begin{proof}
Fix $P \subseteq S' \setminus \{x_0, x_1\}$. The four patterns $P$, $P\cup\{x_0\}$, $P\cup\{x_1\}$, $P\cup\{x_0,x_1\}$ yield aliased values $\sigma(P)$, $\sigma(P)+x_0$, $\sigma(P)+x_0$, $\sigma(P)+2x_0$, i.e., at most three distinct values. There are $2^{k-2}$ choices of $P$.
\end{proof}

\paragraph{Expansion Policy for Worst-Case Bound (count-based).}
To tie running time to the number of \emph{expanded} states (not just distinct aliased values), we restrict which discovered states are enqueued when aliasing is enabled. Let
\[
c(b):=b[i_0]+b[i_1]\in\{0,1,2\},\qquad \chi(b):=b[i_1]\in\{0,1\}.
\]
A state with mask $b$ is \emph{expandable} iff
\[
\mathrm{Expandable}(b)\;:\iff\;(c(b)=0)\ \lor\ (c(b)=2)\ \lor\ \big(c(b)=1\ \wedge\ \chi(b)=0\big).
\]
Operationally, every discovered state is \emph{inserted} into its $(\tilde{\sigma},\chi)$ bucket (for correctness and witness recovery), but it is \emph{enqueued for expansion} iff $\mathrm{Expandable}(b)$ holds. In particular, $(c,\chi)=(1,1)$ (“$x_1$ only”) states are recorded but marked \texttt{doNotExtend}.

\begin{lemma}[Expansion Accounting]\label{lem:expansion-accounting}
Consider a half $\ell$ of size $k$ with a fixed alias pair $(x_0,x_1)$ and the CNF expansion rule. For each base subset $P\subseteq \ell\setminus\{x_0,x_1\}$, the four patterns on $\{x_0,x_1\}$ contribute at most three \emph{expanded} states. Consequently, the number of expanded states in the half is at most $3\cdot 2^{k-2}=\tfrac{3}{4}\cdot 2^k$.
\end{lemma}

\begin{proof}
Partition subsets by their restriction to $\{i_0,i_1\}$. For each base $P\subseteq \ell\setminus\{x_0,x_1\}$, we expand $00$ and $11$, and among $\{10,01\}$ only $10$. Thus at most three expanded states per base, giving $3\cdot 2^{k-2}$.
\end{proof}

\begin{lemma}[Policy Safety]\label{lem:policy-safety}
The expansion policy preserves the lane invariant and completeness of the merge: forbidding expansion from $(c,\chi)=(1,1)$ does not prevent the generation of any aliased sum nor the recovery of any witness.
\end{lemma}

\begin{proof}[Proof sketch]
For $b$ with $(c,\chi)=(1,1)$, let $b^\star$ flip $(i_0,i_1)$ to $(1,0)$. For any $J\subseteq \ell\setminus\{x_0,x_1\}$, $\tilde{\sigma}(b\cup J)=\tilde{\sigma}(b^\star\cup J)$, so all descendants of $b$ are shadowed by descendants of the expandable $b^\star$. $(1,1)$ states are still inserted into the $\chi=1$ bucket, allowing the compensatory merge to return witnesses that include $x_1$.
\end{proof}

\subsection*{Core Procedures (per half)}

\paragraph{Tag and Canonicality.}
\begin{algorithmic}[1]
\Function{ChiTag}{$b$, $i_1$}
  \State \Return $b[i_1]$ \Comment{$\chi\in\{0,1\}$}
\EndFunction
\Function{IsCanonical}{$b$, $i_0$, $i_1$}
  \State \Return $b = \mathsf{canon}(b)$
\EndFunction
\end{algorithmic}

\paragraph{Insert into Memo Table.}
\begin{algorithmic}[1]
\Function{Insert}{$\tilde{\sigma}$, $b$, $i_1$}
  \State $\chi \gets \textsc{ChiTag}(b, i_1)$
  \If{\texttt{Memo}[$\tilde{\sigma}$][$\chi$] $= \bot$ \textbf{ or } $b \prec$ \texttt{Memo}[$\tilde{\sigma}$][$\chi$]}
    \State \texttt{Memo}[$\tilde{\sigma}$][$\chi$] $\gets b$
  \EndIf
\EndFunction
\end{algorithmic}

\paragraph{Expansion Predicate (aliased runs).}
\begin{algorithmic}[1]
\Function{Count}{$b$, $i_0$, $i_1$} \State \Return $b[i_0]+b[i_1]$ \EndFunction
\Function{Expandable}{$b$, $i_0$, $i_1$}
  \State $c \gets \textsc{Count}(b,i_0,i_1)$,\quad $\chi \gets \textsc{ChiTag}(b,i_1)$
  \State \Return $(c=0)\ \lor\ (c=2)\ \lor\ (c=1 \land \chi=0)$
\EndFunction
\end{algorithmic}
\emph{Hook.} When a new state is created under aliasing, set
\[
\texttt{state.doNotExtend} \;\gets\; \neg\,\textsc{Expandable}(\texttt{state.mask}, i_0, i_1).
\]

\paragraph{Aliasing Maps.}
\begin{algorithmic}[1]
\Function{CorrectedSum}{$\tilde{\sigma}$, $\chi$, $x_0$, $x_1$}
  \State \Return $\tilde{\sigma} + \chi \cdot (x_1 - x_0)$
\EndFunction
\Function{AliasedValue}{$\sigma_{\mathrm{real}}$, $\chi$, $x_0$, $x_1$}
  \State \Return $\sigma_{\mathrm{real}} - \chi \cdot (x_1 - x_0)$
\EndFunction
\end{algorithmic}

\paragraph{Interface with \textsc{CheckAliased}.}
In the solver (Alg.~\ref{alg:dmitm-main}), replace the single-key probe by the compound key $(\tilde{\sigma},\chi)$, computed via \textsc{AliasedValue} and \textsc{ChiTag}. Use \textsc{IsCanonical}/\textsc{Expandable} to gate expansion (CNF). Cross-half tests call the 2-lane \textsc{CheckAliased} routine.

\subsection*{Compensatory Witness Recovery}

To ensure correctness during merging, the algorithm performs a $2\times 2$ compensatory check across $\chi$-tags of both halves. For each left-side aliased sum, it computes the required right-side partner under both interpretations.

\paragraph{FindSolution Routine.}
\begin{algorithmic}[1]
\Function{FindSolution}{$t$, MemoLeft, MemoRight, $x_0$, $x_1$, $y_0$, $y_1$}
  \ForAll{$\tilde{\sigma}_L \in \texttt{MemoLeft}$}
    \For{$\chi_L \in \{0,1\}$}
      \State $b_L \gets \texttt{MemoLeft}[\tilde{\sigma}_L][\chi_L]$
      \If{$b_L = \bot$} \textbf{continue} \EndIf
      \State $s_L \gets \textsc{CorrectedSum}(\tilde{\sigma}_L, \chi_L, x_0, x_1)$
      \State $s_R \gets t - s_L$
      \For{$\chi_R \in \{0,1\}$}
        \State $\tilde{\sigma}_R \gets \textsc{AliasedValue}(s_R, \chi_R, y_0, y_1)$
        \If{$\tilde{\sigma}_R \in \texttt{MemoRight}$}
          \State $b_R \gets \texttt{MemoRight}[\tilde{\sigma}_R][\chi_R]$
          \If{$b_R \neq \bot$}
            \State \Return $b_L \,\|\, b_R$ \Comment{concatenate local masks into the global mask}
          \EndIf
        \EndIf
      \EndFor
    \EndFor
  \EndFor
  \State \Return \textsc{None}
\EndFunction
\end{algorithmic}

\subsection*{Complexity Summary}

\begin{itemize}
    \item Each reachable aliased sum stores up to two bitmasks (one per $\chi\in\{0,1\}$).
    \item All probes are $O(1)$ expected time with hashing; lex-compare occurs only on true collisions.
    \item Under the expansion policy, the number of \emph{expanded} states per half is at most $\tfrac{3}{4}\cdot 2^k$ (Lemma~\ref{lem:expansion-accounting}); this drives the worst-case runtime accounting.
    \item Time/space per half: $O(E\cdot n^2)$ deterministic or $O(E\cdot n)$ randomized, with $E\le \tfrac{3}{4}\cdot 2^k$; storage $O(U'\cdot n)$ where $U'$ is the aliased-sum count.
\end{itemize}

\subsection*{Correctness of Controlled Aliasing: Lemma Chain}

\paragraph{Setup and notation.}
Fix one half $\ell$ (let $k:=|\ell|$) and its alias pair $(x_0,x_1)$ at positions $(i_0,i_1)$ with $x_0 \neq x_1$.
For any subset $P \subseteq \ell$ with local bitmask $b \in \{0,1\}^{k}$, write 
$\chi(P):=b[i_1]\in\{0,1\}$. 
Let $\sigma(P)$ denote the true sum and $\tilde{\sigma}(P)$ the aliased sum.
For a reachable value $\tau$, the table stores a 2-tuple 
$\mathrm{Memo}[\tau] \in (\{\bot\}\cup\{0,1\}^{k})^2$, with the lex-minimal bitmask in bucket $\chi$ if any.

\begin{lemma}[Lane Invariant]\label{lem:lane-invariant}
For every subset $P\subseteq \ell$, the pair $(\tilde{\sigma}(P), \chi(P))$ identifies a unique
bucket in the memoization table. During enumeration, the algorithm maintains for each fixed
$(\tau,\chi)$ the lexicographically minimal bitmask among all $P$ with $\tilde{\sigma}(P)=\tau$ and $\chi(P)=\chi$.
\end{lemma}

\begin{proof}
$\chi$ depends only on the inclusion bit at $i_1$, partitioning subsets into two classes. The update rule keeps, for each $(\tilde{\sigma},\chi)$, the lex-minimal witness under the total order $\prec$.
\end{proof}

\begin{lemma}[Soundness of Compensatory Merge]\label{lem:soundness}
If the merge routine returns a witness $W = W_L \cup W_R$ (with $W_L \subseteq \ell_0$ and $W_R \subseteq \ell_1$),
then $\sigma(W_L) + \sigma(W_R) = t$.
\end{lemma}

\begin{proof}
The merge checks the four corrected equalities
\[
\tilde{\sigma}(W_L) + \tilde{\sigma}(W_R) + \chi_R(y_1-y_0) + \chi_L(x_1-x_0)=t
\]
for all $\chi_L,\chi_R\in\{0,1\}$. As $\sigma(\cdot)=\tilde{\sigma}(\cdot)+\chi(\cdot)\cdot(\cdot)$, any returned witness satisfies $\sigma(W_L)+\sigma(W_R)=t$.
\end{proof}

\begin{lemma}[Completeness of Compensatory Merge]\label{lem:completeness}
If there exists a solution $W=W_L\cup W_R$, then the merge routine will
find and return a valid witness.
\end{lemma}

\begin{proof}
For a valid solution, the two halves appear in buckets $(\tilde{\sigma}(W_L),\chi(W_L))$ and $(\tilde{\sigma}(W_R),\chi(W_R))$ by the Lane Invariant. The merge enumerates these tag pairs and restores true sums, so the corresponding check succeeds.
\end{proof}

\paragraph{Conclusion.}
Lemmas~\ref{lem:lane-invariant}–\ref{lem:completeness} establish correctness of the 2-lane structure. Together with Lemmas~\ref{lem:expansion-accounting}–\ref{lem:policy-safety}, they justify the expansion-based accounting used in Theorem~\ref{thm:aliasing-complexity}. The CNF policy is target-independent and scheduler-agnostic, and, combined with compensatory merging, preserves both decision and construction correctness.

\subsection*{Complexity and Target Independence (Summary)}

The Controlled Aliasing transform is purely structural (fixed alias pairs per half), independent of the target $t$ and any randomness. Under CNF, each base pattern contributes to at most three expanded states, yielding the worst-case factor $3/4$ per half and the $O^*(2^{\,n/2-\log_2(4/3)})$ bound when integrated into the interleaved double–meet-in-the-middle solver.

\end{document}